\newtheoremstyle{note}
  {\topsep/2}               
  {\topsep/2}               
  {}                      
  {\parindent}            
  {\itshape}              
  {.}                     
  {5pt plus 1pt minus 1pt}
  {}
\theoremstyle{note}
\newtheorem{theorem}{Theorem}
\newtheorem{lemma}{Lemma}
\newtheorem{proposition}{Proposition}
\theoremstyle{definition}
\theoremstyle{remark}
\newtheorem{remark}{Remark}
\newcommand{\rk}{\mathrm{rk}}
\newcommand{\scf}{\mathscr{F}_\mrm{sc}}
\newcommand{\ucf}{\mathscr{F}_\mrm{uc}}
\newcommand{\syf}{\mathscr{F}_\mrm{s}}
\newcommand{\unf}{\mathscr{F}_\mrm{u}}
\newcommand{\cnot}{\mathcal{U}_{\mrm{CNOT}}}
\newcommand{\mrm}[1]{\mathrm{#1}}
\newcommand{\tr}{\operatorname{tr}}
\newcommand{\diag}{\operatorname{diag}}
\newcommand{\eig}{\operatorname{eig}}
\newcommand{\id}{I}
\newcommand{\rmd}{\mathrm{d}}
\newcommand{\rme}{\mathrm{e}}
\newcommand{\rmi}{\mathrm{i}}
\newcommand{\rmr}{\mathrm{r}}
\newcommand{\rmA}{\mathrm{A}}
\newcommand{\rmB}{\mathrm{B}}
\newcommand{\rmC}{\mathrm{C}}
\newcommand{\rmF}{\mathrm{F}}
\newcommand{\rmG}{\mathrm{G}}
\newcommand{\rmT}{\mathrm{T}}
\newcommand{\rmU}{\mathrm{U}}
\newcommand{\gc}{\mathrm{gc}}
\newcommand{\bbC}{\mathbb{C}}
 \newcommand{\caI}{\mathcal{I}}
 \newcommand{\caN}{\mathcal{N}}
 \newcommand{\caR}{\mathcal{R}}
 \newcommand{\caS}{\mathcal{S}}
\newcommand{\be}{\begin{equation}}
\newcommand{\ee}{\end{equation}}
\newcommand{\ba}{\begin{align}}
\newcommand{\ea}{\end{align}}
\def\<{\langle}  
\def\>{\rangle}  
\newcommand{\ketbra}[2]{| #1\>\< #2|}
\def\outer#1#2{|#1\>\<#2|}       
\newcommand{\eref}[1]{Eq.~\textup{(\ref{#1})}}
\newcommand{\Eref}[1]{Equation~\textup{(\ref{#1})}}
\newcommand{\esref}[1]{Eqs.~\textup{(\ref{#1})}}
\newcommand{\fref}[1]{Fig.~\ref{#1}}
\newcommand{\sref}[1]{Sec.~\ref{#1}}
\newcommand{\Sref}[1]{Section~\ref{#1}}
\newcommand{\thref}[1]{Theorem~\ref{#1}}
\newcommand{\Thref}[1]{Theorem~\ref{#1}}
\newcommand{\thsref}[1]{Theorems~\ref{#1}}
\newcommand{\Thsref}[1]{Theorems~\ref{#1}}
\newcommand{\lref}[1]{Lemma~\ref{#1}}
\newcommand{\Lref}[1]{Lemma~\ref{#1}}
\newcommand{\cref}[1]{Conjecture~\ref{#1}}
\newcommand{\Cref}[1]{Conjecture~\ref{#1}}
\newcommand{\pref}[1]{Proposition~\ref{#1}}
\newcommand{\Pref}[1]{Proposition~\ref{#1}}
\newcommand{\psref}[1]{Propositions~\ref{#1}}
\newcommand{\rcite}[1]{Ref.~\cite{#1}}
\newcommand{\rscite}[1]{Refs.~\cite{#1}}
\begin{document}

\title{Operational one-to-one mapping between coherence and entanglement measures}
\author{Huangjun Zhu}
\email{These authors contributed equally to this work.}
\affiliation{Institute for Theoretical Physics, University of Cologne,
Cologne 50937, Germany}
\email{hzhu1@uni-koeln.de}

\author{Zhihao Ma}
\email{These authors contributed equally to this work.}
\affiliation{Department of Mathematics, Shanghai Jiaotong University, Shanghai, 200240, China}
\email{ma9452316@gmail.com}

\author{Zhu Cao}
\affiliation{Center for Quantum Information, Institute for Interdisciplinary Information Sciences, Tsinghua University, Beijing 100084, China}

\author{Shao-Ming Fei}
\affiliation{School of Mathematical Sciences, Capital Normal University, Beijing 100048, China}

\affiliation{Max-Planck-Institute for Mathematics in the Sciences, 04103 Leipzig, Germany}

\author{Vlatko Vedral}
\affiliation{Department of Physics, University of Oxford, Parks Road, Oxford, OX1 3PU, UK}

\affiliation{Centre for Quantum Technologies, National University of Singapore, 3 Science Drive 2, Singapore 117543, Singapore}

\begin{abstract}

We establish a general operational one-to-one mapping between  coherence measures and entanglement measures: Any entanglement measure of bipartite pure states is the minimum of a suitable coherence measure over product bases. Any coherence measure
of pure states, with extension to mixed states by convex roof, is  the maximum entanglement generated by incoherent operations acting on the system and an incoherent ancilla. Remarkably, the generalized CNOT gate is the universal optimal incoherent operation. 
In this way,  all convex-roof coherence measures, including the  coherence of formation, are endowed with (additional) operational interpretations. By virtue of this connection, many results on entanglement can be translated to the coherence setting, and vice versa. As applications, we provide tight observable lower bounds for generalized entanglement concurrence and coherence concurrence, which enable experimentalists to quantify entanglement and coherence of the maximal dimension in real experiments.

\end{abstract}

\date{\today}
\maketitle

\section{Introduction}

Quantum entanglement is a crucial resource for  many quantum information processing tasks, such as quantum teleportation, dense coding, and quantum key distribution; see \rcite{Horodecki09} for a review. It is also a useful tool for studying various intriguing  phenomena in  many-body physics and high energy physics, such as quantum phase transition  and black hole information paradox.

Quantum coherence underlies entanglement and 
is even more fundamental. It plays a key role in various research areas, such as
 interference \cite{Mandel95,Aberg06,BagaBCH16,BiswGW17}, laser \cite{Mandel95}, quantum metrology \cite{Giovannetti04,EschMD11, MarvS16}, quantum computation \cite{Shor95,Hill16,MaYGV16,Matera16}, quantum thermodynamics  \cite{HoroO13F,Aberg14,LostKJR15, Cwik15, Mitchison15,BrasB15, Goold16, Llobet17}, and photosynthesis \cite{ Engel07,Cheng09}. However,
the significance of coherence as a resource was not fully appreciated  until the works of Aberg~\cite{Aberg06} and  Baumgratz et al. \cite{Baumgratz14}, which studied coherence  from the perspective of resource theories \cite{HoroO13, Bran15, CoecFS16,LiuHL17, WinterYang16, StreAP16}. Coherence has since  found increasing applications and attracted increasing attention. Accordingly, great efforts have been devoted to quantifying coherence,
and a number of  useful coherence measures have been proposed and studied \cite{Aberg06,Baumgratz14,DuBQ15,YuanZCM15,WinterYang16, Chitambar16PRA,Napoli16,Piani16,QiGY16,Chin17, MarvS16,Chitambar16PRL,ChitambarH16,Matera16,Vicente17,MaYGV16,ChengH15,Singh15,Rana16,BiswGW17,BromCA15,YadiV16,YuZXT16,StreAP16}; 
see \rcite{StreAP16} for an overview.

The resource theory of coherence is closely related to the resource theory of entanglement \cite{Aberg06, Baumgratz14, Asboth05,StreSDB15, VogeS14, DuBG15,DuBQ15,QiBD15,XiLF15, YuanZCM15, YaoXGS15,WinterYang16,Chitambar16PRL,Chitambar16PRA,Napoli16,Piani16,QiGY16,ChitambarSRB16,MaYGV16, StreltsovCRB16,Killoran16,ChitambarH16,StreAP16,AdesBC16,ChenGJL16,Chin17}. Many results on coherence theory are inspired by analogs 
on entanglement theory, including 
many coherence measures, such as the relative entropy of coherence  (equal to the distillable coherence) \cite{Aberg06,Baumgratz14,WinterYang16}, coherence of formation (equal to the coherence cost) \cite{Aberg06, YuanZCM15,WinterYang16}, and  robustness of coherence  \cite{Napoli16,Piani16}. In addition, coherence transformations under incoherent operations are surprisingly similar to entanglement transformations under local operations and classical communication (LOCC)
\cite{DuBQ15,DuBG15, YuanZCM15, WinterYang16, Chitambar16PRA,StreAP16}. Furthermore, coherence and entanglement can be converted to each other under certain scenarios of special interest \cite{Aberg06, Asboth05,StreSDB15, VogeS14,  MaYGV16, Killoran16,QiGY16,Chin17}. 
In \rscite{Asboth05,StreSDB15}, it was shown that any degree of coherence in some reference basis can be converted to entanglement via incoherent operations. In addition, this procedure can induce coherence measures, including the relative entropy of coherence  and geometric coherence, 
from   entanglement measures \cite{StreSDB15}. However, little is known about which measures can be induced in this way beyond a few examples, and the connection between coherence and entanglement is far from clear.

In this paper, we  show that any entanglement measure of  bipartite pure states is the minimum of a suitable coherence measure over product bases. Conversely, any coherence measure
of pure states, with extension to mixed states by convex roof,  is equal to the maximum entanglement generated by incoherent operations acting on the system and an incoherent ancilla. 
Remarkably, the generalized CNOT gate is the universal optimal incoherent operation, as illustrated in \fref{fig:mapping}. 
In this way we endow all convex-roof coherence measures  with  operational meanings, including the coherence of formation \cite{Aberg06,YuanZCM15,WinterYang16} and (generalized) coherence concurrence 
\cite{QiGY16,Chin17}. In addition, our work  is instrumental in  studying interconversion between coherence and entanglement.

By virtue of the connection established here, many results on entanglement  detection and quantification can be translated to the coherence setting, and vice versa, which has
wide applications in  quantum information processing. As an illustration, we provide tight observable lower bounds for the generalized entanglement concurrence~\cite{Gour05} in terms of the negativity and robustness of entanglement. In parallel, we also provide  tight observable lower bounds for the generalized 
coherence concurrence~\cite{Chin17} in terms of the $l_1$-norm coherence and robustness of coherence.  Remarkably,  these lower bounds can be estimated in a way that is device independent. These results are useful in detecting and quantifying entanglement and coherence of the maximal dimension in real experiments.

The rest of the paper is organized as follows. In \sref{sec:Pre}, we review the general frameworks for constructing   entanglement monotones (measures) and coherence monotones (measures) based on the convex roof. In \sref{sec:Mapping}, we establish an operational one-to-one mapping between coherence monotones and entanglement monotones based on the convex roof. In \sref{sec:convert}, we derive a necessary condition on converting coherence into entanglement. In \sref{sec:gconcurrence}, we derive tight observable lower bounds for the generalized entanglement concurrence and coherence concurrence. \Sref{sec:summary} summarizes this paper. The Appendices provide additional details on entanglement monotones, coherence monotones,   coherence transformations under incoherent operations (including the majorization criterion), and some technical proofs. 

 \begin{figure}
 	\includegraphics[width=7cm]{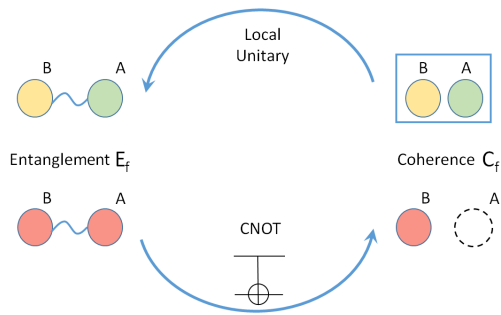}
 	\caption{\label{fig:mapping}(color online) Operational one-to-one mapping between coherence monotones and entanglement monotones. Any entanglement monotone $E_f$  for $f\in \scf$ between B and A is the minimum of $C_f$ over local unitary transformations. Any coherent monotone $C_f$ on the system B is the maximum of $E_f$ generated by incoherent operations acting on the system B and an incoherent ancilla A. The generalized CNOT gate is the universal  optimal incoherent operation. }
 \end{figure}

\section{\label{sec:Pre}Preliminaries}
\subsection{Basic concepts}
A resource theory is characterized by two basic ingredients, namely, the set of \emph{free states} and the set of \emph{free operations} \cite{HoroO13, Bran15, CoecFS16, StreAP16}. In the resource theory of entanglement,  free states are separable states,  and free operations are LOCC \cite{Horodecki09}. In the case of coherence, free states are \emph{incoherent states}, which correspond to density matrices that are diagonal in the reference basis, and free operations are \emph{incoherent operations} (IO). 
Recall that  an operation with Kraus representation $\{K_n\}$ is incoherent if each Kraus operator is incoherent in the sense that $K_n \rho K_n^\dag$ is incoherent whenever  $\rho$ is \cite{Aberg06,Baumgratz14,WinterYang16,StreAP16}. The operation is strictly incoherent  if   both $K_n$ and $K_n^\dag$ are incoherent; the set of such operations is denoted by SIO.

A central question in any resource theory is to quantify the utility  of resource states, states that are not free \cite{HoroO13, Bran15, CoecFS16, StreAP16}. Here are four typical requirements for a coherence measure $C$ \cite{Baumgratz14,StreAP16} (the situation for entanglement is analogous). (C1) Nonnegativity $C(\rho)\geq0$ (usually $C(\rho)=0$ for incoherent states); (C2) monotonicity under any incoherent operation $\Lambda$, $C(\Lambda(\rho))\leq C(\rho)$; (C3) monotonicity on average under any selective incoherent operation  $\{K_n\}$, $\sum_n p_n C(\sigma_n)\leq C(\rho)$, where $\sigma_n=K_n\rho K_n^\dag/p_n$ with $p_n=\tr(K_n\rho K_n^\dag)$; and (C4) convexity, $\sum_j q_j C(\rho_j)\geq C(\sum_j q_j\rho_j)$. Note that (C2) follows from (C3) and (C4).   A \emph{coherence monotone} satisfies   (C2-4), while a \emph{coherence measure} satisfies all (C1-4).  

\subsection{Entanglement monotones and coherence monotones based on the convex roof}

Before discussing the connection between coherence and entanglement, it is instructive to review the general  framework for constructing entanglement monotones  introduced by Vidal \cite{Vida00} and its analog for coherence \cite{DuBQ15}.
Let $\mathcal{H}$ be a $d\times d$ bipartite Hilbert space.
Denote by  $\scf$ the set of real symmetric concave functions on the probability simplex. Given any $f\in \scf$,  an entanglement monotone for  $|\psi\>\in\mathcal{H}$
can be defined as 
\begin{equation}\label{eq:EntMeasurePure}
E_f(\psi):=f(\lambda(\psi)),
\end{equation}
where $\lambda(\psi)$ is the \emph{Schmidt vector} of $\psi$, that is, the vector of Schmidt coefficients  (eigenvalues of each reduced density matrix), which form a probability vector. The  monotone  extends to mixed states by  convex roof,
\begin{equation}\label{eq:EntMeasureMix}
E_f(\rho):=\min_{\{p_j, \psi_j\}} \sum_j p_j E_f(\psi_j),
\end{equation}
where the minimum (or infimum) is taken over all pure state decompositions  $\rho=\sum_j p_j \outer{\psi_j}{\psi_j}$. The extension to systems with  different local dimensions is straightforward. The connection between
entanglement monotones and symmetric concave functions is summarized in \thref{thm:EntMeasMod} below,  which is a variant of the result presented in  \rcite{Vida00}, but tailored to highlight the connection with coherence monotones; see  Appendix~A for background and a proof.
\begin{theorem}\label{thm:EntMeasMod}
	For any $f\in \scf$, the function $E_f$ defined by \esref{eq:EntMeasurePure} and \eqref{eq:EntMeasureMix} is an entanglement monotone. Conversely, the restriction to pure states of any entanglement monotone is identical to $E_f$ for certain $f\in \scf$.
\end{theorem}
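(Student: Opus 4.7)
The strategy is to lean on the two-direction characterization pioneered by Vidal: concave symmetric functions of the Schmidt vector are exactly the pure-state entanglement monotones, and convex-roof extension preserves monotonicity. I will handle the forward implication (any $f\in\scf$ yields a monotone) first, then the converse.

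\emph{Forward direction, pure states.} Because the Schmidt vector $\lambda(\psi)$ is only defined up to permutation, symmetry of $f$ is what makes $E_f(\psi)=f(\lambda(\psi))$ well defined, and local-unitary invariance is automatic. The nontrivial step is monotonicity on average under LOCC, for which I would invoke Vidal's probabilistic transformation theorem: a pure state $\psi$ can be converted by stochastic LOCC to an ensemble $\{p_n,\phi_n\}$ if and only if $\lambda(\psi)\succ\sum_n p_n\lambda(\phi_n)$ (appropriately padded and ordered). Given this, the concavity of $f$ together with symmetry and Schur concavity (concave+symmetric implies Schur concave) gives
\be
\sum_n p_n f(\lambda(\phi_n))\le f\Bigl(\sum_n p_n\lambda(\phi_n)\Bigr)\le f(\lambda(\psi)),
\ee
which is exactly (C3) restricted to pure states. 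Convexity on pure states need not be discussed separately here because the convex-roof extension will build it in.

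\emph{Forward direction, mixed states.} I would then invoke the standard convex-roof lifting argument: monotonicity on pure states plus the convex-roof definition automatically yields (C3), (C4), and hence (C2), for mixed states. Concretely, for an LOCC channel $\{K_n\otimes L_n\}$ and an optimal decomposition $\rho=\sum_j q_j\ketbra{\psi_j}{\psi_j}$ achieving $E_f(\rho)$, each branch gives an ensemble on the output whose averaged $E_f$ is bounded by $E_f(\psi_j)$; summing over $j$ with weights $q_j$ and using that the output decompositions are admissible (hence only upper bound the convex roof of the outputs) yields the required inequality. Convexity is immediate from the convex-roof formula because any mixture of decompositions of the components is a decomposition of the mixture.

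\emph{Converse direction.} Given any entanglement monotone $E$, restrict to pure states and define $f$ on probability vectors by $f(\lambda)=E(\psi_\lambda)$ where $\psi_\lambda=\sum_i\sqrt{\lambda_i}\ket{ii}$. Local unitary invariance (a trivial consequence of (C3)) shows $E(\psi)$ depends only on $\lambda(\psi)$, and the symmetry $f(\pi\lambda)=f(\lambda)$ follows because permuted Schmidt vectors correspond to the same state up to local unitaries. For concavity, I would use Vidal's converse: given any two probability vectors $\lambda$ and $\mu$ and any $p\in[0,1]$, one can exhibit a stochastic LOCC protocol on a suitable pure state $\psi_\nu$ with $\nu=p\lambda+(1-p)\mu$ producing the ensemble $\{(p,\psi_\lambda),(1-p,\psi_\mu)\}$ (this uses the majorization $\nu\succ p\lambda+(1-p)\mu=\nu$, combined with a standard "random-unitary then project" construction yielding the desired branching). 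Applying (C3) to this protocol gives $pf(\lambda)+(1-p)f(\mu)\le f(\nu)$, i.e.\ concavity.

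\emph{Expected main obstacle.} The delicate step is the converse: producing a stochastic LOCC protocol that realizes the convex combination of two arbitrary Schmidt vectors inside the source's Schmidt vector. The forward direction and the convex-roof extension are largely bookkeeping built on Nielsen/Vidal majorization, but packaging the converse cleanly — and verifying that the resulting $f$ is indeed defined on the full probability simplex (e.g.\ handling padding with zeros when the local dimensions differ) — is where care is required. I would defer these technicalities to the appendix, since \thref{thm:EntMeasMod} is essentially the known Vidal characterization repackaged to match the coherence side.
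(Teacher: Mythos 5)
Your proposal is essentially correct, but it takes a genuinely different route from the paper. The paper does not reprove Vidal's characterization at all: Appendix~A quotes Vidal's theorem verbatim in the language of unitarily invariant concave functions $f\in\ucf$ on density matrices, and then reduces Theorem~1 to it by establishing a bijection between $\ucf$ and the symmetric concave functions $\scf$ on the probability simplex (the lifting $\check f(\rho)=f(\eig(\rho))$ and the restriction $\hat f(p)=f(\diag(p))$); the only technical work is showing that lifting preserves concavity, via $\eig(p\rho_1+(1-p)\rho_2)\prec p\eig^\downarrow(\rho_1)+(1-p)\eig^\downarrow(\rho_2)$ combined with Schur concavity. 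You instead prove both directions from scratch using the Nielsen/Jonathan--Plenio machinery for pure-state ensemble transformations: concavity plus Schur concavity for the forward direction, and an explicit two-outcome stochastic LOCC branching $\psi_\nu\to\{(p,\psi_\lambda),(1-p,\psi_\mu)\}$ for the converse. Interestingly, your converse argument is exactly the strategy the paper uses for the \emph{coherence} analogue (Theorem~2, Appendix~B), where the branching is realized by two explicit diagonal Kraus operators; so your route is self-contained and makes the coherence/entanglement parallel visible at the level of proofs, while the paper's route is shorter, delegates the hard LOCC analysis to the cited reference, and isolates the dimension/padding subtleties in a clean function-theoretic lemma.

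Two slips to fix. First, you state the ensemble-transformation criterion with the majorization reversed: $\psi\to\{p_n,\phi_n\}$ under LOCC iff $\lambda(\psi)\prec\sum_n p_n\lambda^\downarrow(\phi_n)$ (the target side majorizes the source), not $\succ$; your displayed inequality chain silently uses the correct direction, since Schur concavity yields $f(\lambda(\psi))\geq f\bigl(\sum_n p_n\lambda^\downarrow(\phi_n)\bigr)$ only from $\lambda(\psi)\prec\sum_n p_n\lambda^\downarrow(\phi_n)$. Second, in the converse the condition you need to invoke is not the vacuous "$\nu\succ p\lambda+(1-p)\mu=\nu$" but rather $\nu=p\lambda+(1-p)\mu\prec p\lambda^\downarrow+(1-p)\mu^\downarrow$, which holds for any two probability vectors and is what licenses the desired branching. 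Neither slip is fatal, but both should be corrected before the argument is written out in full.
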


Interestingly,
  coherence monotones for pure states are also in  one-to-one correspondence with symmetric concave functions on the probability simplex \cite{DuBQ15}. Given any $f\in\scf$, a coherence monotone on $d$-dimensional  pure states
can be defined as follows,
\begin{equation}\label{eq:Cfpure}
C_f(\psi):=f(\mu(\psi)),
\end{equation}
where $\mu(\psi)=(|\psi_0|^2, |\psi_1|^2, \ldots, |\psi_{d-1}|^2)^\rmT$ is the \emph{coherence vector}, and $\psi_j$ are the components of $\psi$ in the reference  basis. For mixed states,
\begin{equation}\label{eq:Cfmix}
C_f(\rho):=\min_{\{p_j, \psi_j\}} \sum_j p_j C_f(\psi_j).
\end{equation}
This construction  is summarized in  \thref{thm:CohMeas} below, which is applicable when either IO or SIO is taken as the set of free operations. The result concerning IO
was first presented in \rcite{DuBQ15}; the original proof has a gap, but can be filled. A simple proof was given in  Appendix~B, which also leads to a simple proof of the  majorization criterion on coherent transformations \cite{DuBG15}.
\begin{theorem}\label{thm:CohMeas}
	For any $f\in\scf$, the function $C_f$ defined by \esref{eq:Cfpure} and \eqref{eq:Cfmix} is a coherence monotone. Conversely, the restriction to pure states of any coherence monotone is identical to $C_f$ for certain $f\in\scf$.
\end{theorem}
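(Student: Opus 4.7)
The plan is to verify the two directions separately, leaning on the pure-state characterization and using the defining properties (C2)--(C4) of coherence monotones.

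For the forward direction, convexity (C4) is immediate from the convex-roof definition in \eref{eq:Cfmix}, and (C2) follows from (C3) and (C4) as already noted in the text. The substantive work is (C3). I would first prove it on pure states and then extend it to mixed states by the usual convex-roof argument: given a selective incoherent operation acting on an optimal pure-state decomposition of $\rho$, combine the pure-state version of (C3) with the definition of $C_f$ on mixed states, and pass to the infimum. On pure states, incoherence forces each Kraus operator to have at most one nonzero entry per column, so $K_n=\sum_j c_j^{(n)}\,|f_n(j)\>\<j|$ for some function $f_n$ on the index set and coefficients satisfying $\sum_n |c_j^{(n)}|^2=1$ by completeness. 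A direct calculation then gives $p_n=\sum_j |c_j^{(n)}|^2|\psi_j|^2$ and $(\mu(\phi_n))_k=p_n^{-1}\sum_{j:f_n(j)=k}|c_j^{(n)}|^2|\psi_j|^2$. This exhibits the transition $\mu(\psi)\mapsto\{(p_n,\mu(\phi_n))\}$ as a two-stage classical stochastic process: first split the mass at each index $j$ across outcomes $n$, then relabel indices inside each outcome via $f_n$. I would then invoke Jensen's inequality on the splitting step and the symmetry of $f$ on the relabeling step to conclude $\sum_n p_n f(\mu(\phi_n))\le f(\mu(\psi))$, which is precisely (C3) for pure states.

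For the converse direction, given an arbitrary coherence monotone $C$ I would define $f$ on the probability simplex by $f(p):=C(|\psi_p\>)$ with $|\psi_p\>:=\sum_j \sqrt{p_j}\,|j\>$. Any pure state $|\psi\>$ with coherence vector $p$ satisfies $|\psi\>=D_\phi|\psi_p\>$ for some diagonal phase unitary $D_\phi$; since $D_\phi$ and $D_\phi^\dag$ are both incoherent, (C2) applied in both directions gives $C(|\psi\>)=C(|\psi_p\>)=f(\mu(\psi))$, so $f$ determines $C$ on pure states. Symmetry of $f$ follows because permutation unitaries and their inverses are incoherent, hence leave $C$ invariant. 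For concavity, given $\bar p=\sum_k q_k p^{(k)}$ I would exhibit the diagonal strictly incoherent Kraus operators
\begin{equation}
K_k:=\sum_j \sqrt{q_k p^{(k)}_j/\bar p_j}\,|j\>\<j|,
\end{equation}
which satisfy $\sum_k K_k^\dag K_k=I$ and $K_k|\psi_{\bar p}\>=\sqrt{q_k}\,|\psi_{p^{(k)}}\>$. Applying (C3) to this selective incoherent operation yields $f(\bar p)\ge \sum_k q_k f(p^{(k)})$, so $f\in\scf$.

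The main obstacle is the pure-state version of (C3) in the forward direction, where one must correctly separate the ``splitting'' and ``relabeling'' contributions of an incoherent Kraus operator. The delicate case is when $f_n$ is not injective, so that several input indices collide into a single output component and Jensen's inequality must be invoked on the combined mass rather than termwise; this is precisely the point where the argument in \rcite{DuBQ15} reportedly left a gap, and it is where care is most needed. Once the pure-state inequality is in place, the extension to mixed states is a routine convex-roof computation, and both the SIO and IO versions of the statement follow since the converse only needs strictly incoherent (diagonal) Kraus operators, while the forward direction handles all incoherent Kraus operators uniformly.
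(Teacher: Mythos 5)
Your converse direction is essentially the paper's own argument: define $f(p)=C\bigl(\sum_j\sqrt{p_j}\,|j\>\bigr)$, obtain symmetry from permutation (monomial) unitaries, and obtain concavity by exhibiting diagonal strictly incoherent Kraus operators that split $|\psi_{\bar p}\>$ into the states $|\psi_{p^{(k)}}\>$ and invoking (C3); the paper does this with two Kraus operators and a $T$-matrix-free argument, yours is the many-outcome version, but the content is the same. That half is fine.

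The forward direction, however, has a genuine gap, and it sits exactly at the step you flag as delicate. For an incoherent Kraus operator $K_n=\sum_j c_j^{(n)}|f_n(j)\>\<j|$ with $f_n$ non-injective, the $k$-th amplitude of $K_n|\psi\>$ is $\sum_{j:f_n(j)=k}c_j^{(n)}\psi_j$, so
\begin{equation}
p_n\bigl(\mu(\phi_n)\bigr)_k=\biggl|\sum_{j:f_n(j)=k}c_j^{(n)}\psi_j\biggr|^2 ,
\end{equation}
which is \emph{not} $\sum_{j:f_n(j)=k}|c_j^{(n)}|^2|\psi_j|^2$, and likewise $p_n\neq\sum_j|c_j^{(n)}|^2|\psi_j|^2$ in general. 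A two-dimensional example: $K_{1,2}=\tfrac{1}{\sqrt{2}}\bigl(|0\>\<0|\pm|0\>\<1|\bigr)$ form a valid incoherent instrument, and acting on $(|0\>+|1\>)/\sqrt{2}$ they give $p_1=1$, $p_2=0$, whereas your formulas predict $p_1=p_2=1/2$. Consequently the transition $\mu(\psi)\mapsto\{(p_n,\mu(\phi_n))\}$ is \emph{not} a classical split-then-relabel stochastic process: colliding indices interfere, the ``combined mass'' at an output index is the squared modulus of a sum of amplitudes rather than a sum of probabilities, and Jensen's inequality on the splitting step does not reach the actual $\sum_n p_nf(\mu(\phi_n))$. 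This is precisely the gap in \rcite{DuBQ15} that the paper set out to repair. The paper's resolution is \lref{lem:MajorizeOneMany}: write $K_n=P_n\tilde K_n$ with $P_n$ a permutation and $\tilde K_n$ upper triangular, sort the $|\psi_j|$ in decreasing order, and show by a partial-sum identity that all cross terms aggregate into the manifestly nonnegative quantity $\sum_n\sum_{j\leq r}\bigl|\sum_{l>r}(\tilde K_n)_{jl}\psi_l\bigr|^2$, which yields the majorization $\mu(\psi)\prec\sum_np_n\mu^\downarrow(\phi_n)$; concavity plus Schur concavity of $f$ then give (C3) on pure states as you intend, and the convex-roof extension to mixed states is routine. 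Without this lemma (or an equivalent treatment of the interference terms) your forward direction does not go through.
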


\Thsref{thm:EntMeasMod} and \ref{thm:CohMeas}
provide many useful entanglement and coherence measures. When $f(p)=-\sum_j p_j \log p_j$ denotes the Shannon entropy, $E_f$ is the celebrated entanglement of formation $E_\rmF$ (coinciding with the relative entropy of entanglement $E_\rmr$ for pure states) \cite{Horodecki09}, and $C_f$ is the coherence of formation $C_\rmF$ \cite{Aberg06,YuanZCM15} (equal to the coherence cost $C_\rmC$ \cite{WinterYang16}). 
When $f(p)=1-\max_j p_j$,  $E_f$ is the geometric entanglement $E_\rmG$ \cite{Horodecki09}, and $C_f$ is the geometric coherence $C_\rmG$ \cite{StreSDB15}. 
When $f(p)=d(\prod_j p_j)^{1/d}$,  $E_f$ and   $C_f$ reduce to the generalized entanglement concurrence~\cite{Gour05} and coherence concurrence \cite{Chin17}. These measures play important roles in theoretical studies and practical applications, so a number of methods have been developed to compute or approximate them \cite{ VollW01,ChenAF05,TothMG15,SentEGH16,GiraG17}.

\section{\label{sec:Mapping}Operational one-to-one mapping between coherence measures and entanglement measures}

The similarity between entanglement monotones  and coherence monotones reflected in  \thsref{thm:EntMeasMod} and \ref{thm:CohMeas} calls for a simple explanation. Here we shall reveal the  operational underpinning of this  resemblance.

Our study benefits from the theory of majorization \cite{MarsOA11book,Bhat97book}, which has found extensive applications in quantum information science \cite{Niel99,Vida99,ZyczB02,Turg07,DuBG15,DuBQ15,QiBD15,BuSW16}.   Given two $d$-dimensional real vectors $x=(x_0, x_1, \ldots, x_{d-1})^\rmT$ and $y=(y_0, y_1, \ldots, y_{d-1})^\rmT$, vector $x$ is majorized by $y$, written as $x\prec y$ or $y\succ x$, if
\begin{equation}
\sum_{j=0}^k x^\downarrow_j\leq \sum_{j=0}^k y^\downarrow_j \quad \forall k=0, 1, \dots, d-1,
\end{equation}
with equality for $k=d-1$. Here $x^\downarrow$ denotes the vector obtained  by arranging the components of $x$ in decreasing order.
In this work, we need to consider majorization relations between vectors of different dimensions. In such cases, it is understood implicitly that the vector with fewer components is padded with a number of  "0" to match the other vector. The notation $x\simeq y$ means that  
$x\prec y$ and $y\prec x$, so that  
 $x$ and $y$ have the same nonzero components  up to permutations.

\subsection{Entanglement as minimal coherence} 
 
Now we clarify the relation between coherence and entanglement for a bipartite state $|\psi\>$ in the Hilbert space $\mathcal{H}=\mathcal{H}_\rmB\otimes\mathcal{H}_\rmA$ of dimension $d_\rmB\times d_\rmA$. 
The reference basis   is the tensor product of respective reference bases. Denote by $E_\rk(\psi)$ the Schmidt rank of $\psi$ and $C_\rk(\psi)$ the coherence rank (number of nonzero components of $\mu(\psi)$). 
\begin{lemma}\label{lem:Majorization}
	 $\mu(\psi)\prec \lambda(\psi)$ and $C_\rk(\psi)\geq E_\rk(\psi)$ for any $|\psi\>\in \mathcal{H}_\rmB\otimes\mathcal{H}_\rmA$.  If  $\mu(\psi)\simeq \lambda(\psi)$, then $C_\rk(\psi)=E_\rk(\psi)$, and vice versa; both of them hold  if and only if (iff) $|\psi\>$ has the form
\begin{equation}\label{eq:MajorMaxKet}
|\psi\>=\sum_j \sqrt{\lambda_j(\psi)}\rme^{\rmi \theta_j}|\pi_1(j)\pi_2(j)\>,
\end{equation}
where $\theta_j$ are arbitrary  phases, and 
 $\pi_1,\pi_2$ are permutations of basis states of  
$\mathcal{H}_\rmB,\mathcal{H}_\rmA$, respectively.
\end{lemma}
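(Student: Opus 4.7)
The plan is to establish the majorization $\mu(\psi)\prec\lambda(\psi)$ first; the rank inequality and its equality case will then follow from that majorization together with a short combinatorial argument on the coefficient matrix $\Psi$.

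Writing $|\psi\>=\sum_{i,j}\psi_{ij}|i\>|j\>$ in the reference basis, $\Psi=(\psi_{ij})$ is a $d_\rmB\times d_\rmA$ matrix whose squared singular values, padded by zeros, equal $\lambda(\psi)$. To prove $\mu(\psi)\prec\lambda(\psi)$, I would chain two standard majorizations. First, setting $d_i:=(\Psi\Psi^\dag)_{ii}=\sum_j|\psi_{ij}|^2$, the Schur--Horn theorem gives $(d_i)_i\prec\lambda(\psi)$. Second, $\mu(\psi)$ is obtained from $(d_i)_i$ by splitting each $d_i$ into the nonnegative pieces $\{|\psi_{ij}|^2\}_j$; a short bookkeeping argument (for any top-$k$ index set $S$ of $\mu(\psi)$, the row labels appearing in $S$ form a set $I$ with $|I|\leq k$, so $\sum_{(i,j)\in S}|\psi_{ij}|^2\leq\sum_{i\in I}d_i\leq\sum_{m=1}^k d_m^\downarrow$) yields $\mu(\psi)\prec(d_i)_i$. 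Transitivity closes the first assertion.

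The rank inequality is then immediate: if $E_\rk(\psi)=r$ then $\sum_{m=1}^{r-1}\lambda_m^\downarrow<1$, so by majorization $\sum_{m=1}^{r-1}\mu_m^\downarrow<1$, forcing $\mu(\psi)$ to have at least $r$ positive entries. The implication $\mu\simeq\lambda\Rightarrow C_\rk=E_\rk$ is automatic since $\simeq$ preserves the multiset of nonzero entries.

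The core remaining step is the converse $C_\rk(\psi)=E_\rk(\psi)\Rightarrow\mu(\psi)\simeq\lambda(\psi)$ together with the normal form \eqref{eq:MajorMaxKet}. Suppose $C_\rk(\psi)=E_\rk(\psi)=r$, so $\Psi$ has exactly $r$ nonzero entries and matrix rank $r$. Since rank is bounded by the number of nonzero rows, and the number of nonzero rows is bounded by the total number of nonzero entries (each nonzero row contains at least one), both bounds must be tight, forcing every nonzero row of $\Psi$ to contain exactly one nonzero entry; applied symmetrically to columns, $\Psi$ is supported on a set $\{(i_k,j_k)\}_{k=1}^r$ with the $i_k$ all distinct and the $j_k$ all distinct, i.e.\ a generalized partial permutation. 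Its singular values are therefore $|\psi_{i_k j_k}|$, giving $\mu(\psi)\simeq\lambda(\psi)$, and $|\psi\>=\sum_k\sqrt{\lambda_k}\,\rme^{\rmi\theta_k}|i_k\>|j_k\>$ with $\theta_k=\arg\psi_{i_k j_k}$ matches \eqref{eq:MajorMaxKet} after extending $(i_k)_k$ and $(j_k)_k$ to full permutations $\pi_1$, $\pi_2$ of the basis states. The reverse direction from \eqref{eq:MajorMaxKet} to $\mu\simeq\lambda$ and $C_\rk=E_\rk$ is by inspection. The main obstacle is this structural step forcing $\Psi$ into a generalized permutation form; the majorization parts are routine, but the combinatorial rank-matching argument needs careful bookkeeping to secure the "iff" in \eqref{eq:MajorMaxKet}.
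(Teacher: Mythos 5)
Your proposal is correct and follows essentially the same route as the paper's Appendix~C proof: the chain $\mu(\psi)\prec\diag(\rho_\rmB)\prec\eig(\rho_\rmB)\simeq\lambda(\psi)$ (your $(d_i)_i$ is exactly $\diag(\rho_\rmB)$, and Schur--Horn supplies the second link), followed by the same counting argument that rank~$r$ forces at least $r$ nonzero rows and columns, so $r$ nonzero entries must form a generalized partial permutation. The only difference is that you spell out the splitting majorization $\mu(\psi)\prec\diag(\rho_\rmB)$ explicitly, which the paper takes as immediate.
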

\Lref{lem:Majorization} is proved in  Appendix~C. It  implies that
$\mu\left( (\mathcal{U}_1\otimes \mathcal{U}_2)(\psi) \right)\prec \lambda(\psi)$
for arbitrary local unitaries $U_1, U_2$, where $\mathcal{U}_1, \mathcal{U}_2$ denote the channels corresponding to  $U_1, U_2$. In addition,
\begin{equation}\label{eq:MajorMax}
\max_{U_1, U_2}\mu^\downarrow\left( (\mathcal{U}_1\otimes \mathcal{U}_2)(\psi) \right)\simeq \lambda^\downarrow(\psi).
\end{equation}
 Here  the maximization is taken with respect to the majorization order, which is well defined, as guaranteed by \Lref{lem:Majorization} and the Schmidt decomposition. In this way, \lref{lem:Majorization}
 offers an appealing interpretation of  the Schmidt vector in terms of the  coherence vector.

\begin{theorem}\label{thm:EntCoh} For any $f\in \scf$,
\begin{equation}\label{eq:EntCohMin}
	E_f(\rho)\leq \min_{U_1, U_2} C_f\bigl(U_1\otimes U_2 \rho (U_1\otimes U_2)^\dag\bigr)\quad \forall \rho;
	\end{equation}
the inequality is saturated if $\rho$ is pure.
\end{theorem}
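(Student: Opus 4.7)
The plan is to handle the pure-state case first by combining \lref{lem:Majorization} with Schur-concavity, and then extend to mixed states by a standard convex-roof argument.

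For a pure state $\psi$, the key observation is that local unitaries preserve the Schmidt vector, so $E_f\bigl((\mathcal{U}_1\otimes\mathcal{U}_2)(\psi)\bigr)=E_f(\psi)$ for all $U_1,U_2$. By \lref{lem:Majorization} we have $\mu\bigl((\mathcal{U}_1\otimes\mathcal{U}_2)(\psi)\bigr)\prec\lambda(\psi)$. Since every $f\in\scf$ is symmetric and concave on the probability simplex, it is Schur-concave, and therefore $C_f\bigl((\mathcal{U}_1\otimes\mathcal{U}_2)(\psi)\bigr)=f\bigl(\mu((\mathcal{U}_1\otimes\mathcal{U}_2)(\psi))\bigr)\geq f\bigl(\lambda(\psi)\bigr)=E_f(\psi)$. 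Minimizing over $U_1,U_2$ yields the claimed inequality on pure states. Saturation is then immediate from the second half of \lref{lem:Majorization}: one can choose $U_1,U_2$ (permutations composed with the local bases diagonalizing the two reduced density matrices) so that $\mu\simeq\lambda$, at which point $C_f=E_f$ and the minimum is attained.

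To extend to an arbitrary mixed $\rho$, fix any $U_1,U_2$ and set $\sigma:=(U_1\otimes U_2)\rho(U_1\otimes U_2)^\dag$. Pick an optimal (or, if necessary, $\varepsilon$-optimal) pure-state decomposition $\sigma=\sum_j q_j\ketbra{\phi_j}{\phi_j}$ realizing $C_f(\sigma)$ via the convex roof of \eref{eq:Cfmix}. Conjugating back, $\rho=\sum_j q_j(U_1\otimes U_2)^\dag\ketbra{\phi_j}{\phi_j}(U_1\otimes U_2)$ is a legitimate pure-state decomposition of $\rho$, and because $E_f$ on pure states is invariant under local unitaries, each pulled-back pure state has entanglement $E_f(\phi_j)$. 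Combining the definition of $E_f(\rho)$ as a convex-roof infimum with the pure-state inequality applied to each $\phi_j$ gives
\begin{equation*}
E_f(\rho)\leq\sum_j q_j E_f(\phi_j)\leq\sum_j q_j C_f(\phi_j)=C_f(\sigma).
\end{equation*}
Since $U_1,U_2$ were arbitrary, the minimum of the right-hand side over local unitaries still upper-bounds $E_f(\rho)$, completing the proof. Compactness of the unitary group and continuity of $C_f$ justify writing $\min$ rather than $\inf$.

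The argument is largely mechanical once \lref{lem:Majorization} is in hand; the only subtle point I anticipate is the implicit zero-padding convention used to compare $\mu(\psi)$ (of length $d_\rmB d_\rmA$) with $\lambda(\psi)$ (of length $\min(d_\rmB,d_\rmA)$), which must be respected so that Schur-concavity of $f$ can be invoked on a common simplex. This is handled by the convention already adopted by the authors for majorization between vectors of different dimensions, and does not affect the Schur-concave comparison. Beyond that bookkeeping, neither the pure-state step nor the convex-roof extension presents any real obstacle.
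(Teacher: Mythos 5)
Your proof is correct and follows essentially the same route as the paper: Lemma~\ref{lem:Majorization} plus Schur concavity of $f$ for the pure-state inequality, saturation via the Schmidt decomposition, and a convex-roof decomposition argument (with the same $\varepsilon$-optimality caveat) for mixed states. The only cosmetic difference is that you decompose the rotated state $\sigma$ and pull back through the local unitaries, whereas the paper decomposes $\rho$ directly and invokes local unitary invariance of $E_f$; these are the same argument.
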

The bound in \eref{eq:EntCohMin} is also saturated by maximally correlated states \cite{Rain99, StreSDB15,WinterYang16} according to \thref{thm:MC} below.

\begin{proof}
If  $\rho=\outer{\psi}{\psi}$ is   pure, then    $\mu(\psi)\prec \lambda(\psi)$ by \lref{lem:Majorization}, so  $E_f(\rho)=
f(\lambda(\psi))\leq f(\mu(\psi))=
C_f(\rho)$ since $f$ is concave and thus  Schur concave. This result  confirms \eref{eq:EntCohMin} for pure states since entanglement is invariant under local unitary transformations.
The inequality is saturated thanks to the Schmidt decomposition.

Now suppose $\rho$ is a mixed state with an optimal  decomposition $\rho=\sum_j p_j \rho_j$ with respect to $C_f$ (for simplicity, here we assume that the value of $C_f(\rho)$ can be attained by some decomposition of $\rho$, but this assumption is not essential to completing the following proof). Then
\begin{equation}
C_f(\rho)=\sum_j p_j C_f(\rho_j)\geq \sum_j p_j E_f(\rho_j)\geq E_f(\rho),
\end{equation}
from which \eref{eq:EntCohMin} follows.
\end{proof}

\subsection{Coherence as maximal entanglement}
In contrast with \Thref{thm:EntCoh}, in this section we show that every coherence monotone of pure states, with extension to mixed states by convex roof, is  the maximum entanglement generated by incoherent operations acting on the system and an incoherent ancilla. 

This line of research is inspired by a recent work of  Streltsov et al. \cite{StreSDB15}, according to which any coherent state on $\mathcal{H}_\rmB$ can generate entanglement under  incoherent operations acting on the system and an incoherent ancilla.  Moreover, the maximum entanglement $E$ generated with respect to any given entanglement monotone   defines a coherence monotone $C_E$ as follows,
\begin{equation}\label{eq:EntGen}
C_E(\rho):=\lim_{d_\rmA\rightarrow \infty}\left\{\sup_{\Lambda_\rmi}E\left(\Lambda_\rmi\left[\rho\otimes \outer{0}{0}\right]\right)\right\}.
\end{equation}
Here $d_\rmA$ is the dimension of the ancilla, and the supremum runs  over all incoherent operations. Interestingly, $C_E=C_\rmr,  C_\rmG$ when 
 $E=E_\rmr, E_\rmG$. However, little is known about other coherence monotones so constructed.

By  \eref{eq:EntGen}, 
we can introduce another coherence monotone for any symmetric concave function $f\in\scf$,
\begin{equation}
\tilde{C}_f(\rho):=C_{E_f}:=\lim_{d_\rmA\rightarrow \infty}\left\{\sup_{\Lambda_\rmi}E_f\left(\Lambda_\rmi\left[\rho\otimes \outer{0}{0}\right]\right)\right\}.
\end{equation}
Surprisingly, $\tilde{C}_f(\rho)$ coincides with $C_f$ for any $f\in \scf$. A key to establishing this result is  the generalized CNOT gate $\cnot$ corresponding to the unitary $U_{\mrm{CNOT}}$,
\begin{equation}
U_{\mrm{CNOT}}|jk\>=\begin{cases}
|j (j+k)\> & k<d_\rmB,\\
|j k\> &k\geq d_\rmB,
\end{cases}
\end{equation}
where the addition is modulo $d_\rmB$. This operation (defined when $d_\rmA\geq d_\rmB$)  turns  any state
$\rho=\sum_{jk}\rho_{jk}\outer{j}{k}$ on $\mathcal{H}_\rmB$ into a \emph{maximally correlated state} \cite{Rain99, StreSDB15,WinterYang16},
\begin{equation}
\rho_{\mrm{MC}}:=\cnot\left[\rho\otimes \outer{0}{0}\right]=\sum_{jk}\rho_{jk}\outer{jj}{kk}.
\end{equation}

\begin{theorem}\label{thm:MC}
$E_f(\rho_\mrm{MC})=C_f(\rho_\mrm{MC})=C_f(\rho)=\tilde{C}_f(\rho)$ for any $f\in \scf$.	
\end{theorem}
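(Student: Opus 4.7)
The plan is to establish the chain of equalities in three linked steps, exploiting the observation that any pure state in the support of $\rho_\mrm{MC}$ lies in the "diagonal" subspace $\mathrm{span}\{|jj\>\}$, where the Schmidt vector and coherence vector coincide.

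\textbf{Step 1: $E_f(\rho_\mrm{MC})=C_f(\rho_\mrm{MC})$.} First, I would observe that the support of $\rho_\mrm{MC}=\sum_{jk}\rho_{jk}|jj\>\<kk|$ is contained in $\mathrm{span}\{|jj\>\}$, so every pure state appearing in a valid decomposition has the form $|\phi\>=\sum_j \phi_j|jj\>$. For such $|\phi\>$ the Schmidt and coherence vectors are both $(|\phi_j|^2)_j$, hence $E_f(|\phi\>)=f(\mu(\phi))=C_f(|\phi\>)$. Consequently the convex-roof optimizations over decompositions of $\rho_\mrm{MC}$ coincide termwise. The inequality $E_f(\rho_\mrm{MC})\le C_f(\rho_\mrm{MC})$ is anyway guaranteed by \thref{thm:EntCoh} (taking $U_1=U_2=I$), and the diagonal-support argument gives the reverse inequality.

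\textbf{Step 2: $C_f(\rho_\mrm{MC})=C_f(\rho)$.} I would exhibit the natural bijection between pure-state decompositions. The map $V:|\psi\>\mapsto \cnot(|\psi\>\otimes|0\>)=\sum_j \psi_j|jj\>$ is an isometry, so any decomposition $\rho=\sum_j p_j|\psi_j\>\<\psi_j|$ pushes forward to $\rho_\mrm{MC}=\sum_j p_j V|\psi_j\>\<\psi_j|V^\dag$, and conversely any decomposition of $\rho_\mrm{MC}$ into vectors $|\phi_j\>=\sum_k(\phi_j)_k|kk\>$ pulls back (as shown in Step~1) via $V^{-1}$ to a decomposition of $\rho$ with the same weights (this is a straightforward computation matching off-diagonal matrix elements). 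Because $V$ preserves the coherence vector, $C_f(V|\psi\>)=C_f(|\psi\>)$ on pure states, and the two convex-roof problems are identified.

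\textbf{Step 3: $\tilde C_f(\rho)=C_f(\rho)$.} For the lower bound, I would simply choose the incoherent operation $\Lambda_\rmi=\cnot$ (which is incoherent since $U_{\mrm{CNOT}}$ permutes the computational basis) and invoke Steps~1 and 2:
\begin{equation}
\tilde C_f(\rho)\ge E_f(\cnot[\rho\otimes|0\>\<0|])=E_f(\rho_\mrm{MC})=C_f(\rho).
\end{equation}
For the upper bound, for any incoherent $\Lambda_\rmi$ I would chain
\begin{equation}
E_f(\Lambda_\rmi[\rho\otimes|0\>\<0|])\le C_f(\Lambda_\rmi[\rho\otimes|0\>\<0|])\le C_f(\rho\otimes|0\>\<0|)=C_f(\rho),
\end{equation}
where the first inequality is \thref{thm:EntCoh}, the second is monotonicity of $C_f$ under IO (\thref{thm:CohMeas}), and the equality uses that any pure state in the support of $\rho\otimes|0\>\<0|$ is of the form $|\psi\>\otimes|0\>$ and hence has the same coherence vector as $|\psi\>$.

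The only genuinely delicate point is the support/bijection argument in Step~2: I need to be sure that restricting to decompositions inside the support of $\rho_\mrm{MC}$ does not lose optimality (which follows from the standard fact that any convex decomposition uses states in the range of the density matrix) and that the pullback via $V^{-1}$ truly reproduces $\rho$. Once that is set, the rest is a clean assembly of previous results, with $\cnot$ playing the role of the universal optimal incoherent operation that realizes every inequality as an equality.
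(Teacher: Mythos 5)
Your proposal is correct and follows essentially the same route as the paper's proof: the support of $\rho_\mrm{MC}$ forces every decomposition vector into $\mathrm{span}\{|jj\>\}$ where Schmidt and coherence vectors coincide, the generalized CNOT gives the lower bound on $\tilde{C}_f$, and the upper bound is the chain via \thref{thm:EntCoh} and monotonicity of $C_f$. The only difference is that you spell out the isometry argument for $C_f(\rho_\mrm{MC})=C_f(\rho)$, which the paper dismisses as clear from the definition.
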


\begin{proof}
The equality $C_f(\rho_\mrm{MC})=C_f(\rho)$ is clear from the definition of $\rho_\mrm{MC}$.
The equality $E_f(\rho_\mrm{MC})=C_f(\rho_\mrm{MC})$  follows from the fact that  any $|\Psi\rangle$ in the support of $\rho_{\mrm{MC}}$ has the form $|\Psi\rangle=\sum_j c_j|jj\>$ with $\sum_j |c_j|^2=1$, so that $E_f(\Psi)=C_f(\Psi)$.
Therefore, $C_f(\rho)=E_f(\rho_{\mrm{MC}}) \leq \tilde{C}_f(\rho)$. The converse  $\tilde{C}_f(\rho)\leq C_f(\rho)$ holds because
	\begin{align}
	E_f\left(\Lambda_\rmi\left[\rho\otimes \outer{0}{0}\right]\right)&\leq C_f\left(\Lambda_\rmi\left[\rho\otimes \outer{0}{0}\right]\right)\nonumber\\
	&\leq C_f\left(\rho\otimes \outer{0}{0}\right)=C_f(\rho),
	\end{align}
where the first inequality follows from \thref{thm:EntCoh}, and the second one from the monotonicity of  $C_f$.
\end{proof}

\Thref{thm:MC}  endows every coherence monotone of pure states with an operational meaning as the maximal entanglement that can be generated between the system and  an incoherent ancilla under incoherent operations. This connection extends to all coherence monotones of  mixed states that are
based on the convex roof. Remarkably, the generalized CNOT gate is optimal
 with respect to all these monotones, which further implies that SIO and IO are equally powerful for entanglement generation. \Thsref{thm:EntCoh} and \ref{thm:MC} together establish a one-to-one mapping between  coherence monotones  and entanglement monotones based on the convex roof, as illustrated in \fref{fig:mapping}. Recently, this mapping was extended to many other monotones, which are not based on the convex roof \cite{ZhuHC17A, ZhuHC17C}.

\Thref{thm:MC} in particular applies to  measures based on R\'enyi $\alpha$-entropies $f(p)=(\log\sum_j p_j^\alpha)/(1-\alpha)$
with $0\leq \alpha\leq 1$ \cite{Vida00,DuBQ15,Chitambar16PRA}, which play a key role in catalytic entanglement and coherence transformations \cite{Turg07,BuSW16}. The limit $\alpha\rightarrow 1$ recovers the relation
$E_\rmC(\rho_{\mrm{MC}})=E_\rmF(\rho_{\mrm{MC}})=C_\rmF(\rho)=C_\rmC(\rho)$~\cite{WinterYang16}. \Thref{thm:MC} also implies 
$E_\rmG(\rho_{\mrm{MC}})=C_\rmG(\rho)$ \cite{StreSDB15}.
Moreover, the generalized CNOT gate is the universal optimal incoherence operation. This conclusion was  known for the geometric measure \cite{StreSDB15},   but our proof is simpler even in this case.

The power of \Thref{thm:MC} is not limited to  entanglement monotones based on convex roof. It
provides a nontrivial upper bound on entanglement  generation  for every entanglement monotone $E$. Note that, when restricted to pure states,  $E$  is determined by a symmetric concave function $f_E\in \scf$, which in turn defines an entanglement monotone $\hat{E}:=E_{f_E}$, usually referred to as the convex roof (or convex-roof  extension) of $E$.
For example,  $E_\rmF$ is the convex roof of $E_\rmr$. By construction  $\hat{E}(\sigma)\geq E(\sigma)$ for any bipartite state $\sigma$ (with equality for pure states), so $C_E(\rho)\leq C_{\hat{E}}(\rho)= C_{f_E}(\rho)$.
The same idea can also extend the scope of \thref{thm:EntCoh}.

 As another extension,  \thsref{thm:EntCoh} and  \ref{thm:MC} still apply if $E_f, C_f$ are replaced by $h(E_f), h(C_f)$ with  $h$  a real function that is monotonically increasing. In addition, the constructions in Eqs.~(\ref{eq:EntMeasurePure}-\ref{eq:Cfmix}) can be extended to functions $f$ that are Schur concave, but not necessarily concave; the resulting  quantifiers $E_f, C_f$ are not necessarily full monotones, but are useful in some applications \cite{ZyczB02,Turg07,BuSW16}. \Thref{thm:EntCoh}  holds as before, and so do the equalities $E_f(\rho_\mrm{MC})=C_f(\rho_\mrm{MC})=C_f(\rho)$ in \Thref{thm:MC}.


\section{\label{sec:convert}Converting coherence into entanglement}

As an application of the results presented in the previous section, here we derive a necessary condition on  converting  coherence into entanglement with incoherent operations, which is sufficient in a special case. We also derive an upper bound on the conversion probability when there is no deterministic transformation.
\begin{theorem}\label{thm:CohEntTran}
Suppose $\outer{\Phi}{\Phi}=\Lambda_\rmi\left[\outer{\psi}{\psi}\otimes \outer{0}{0}\right]$ with $\Lambda_\rmi$ being an incoherent operation. Then $\lambda(\Phi)\succ \mu(\psi)$.	If $|\Phi\>$ has Schmidt form in the reference basis,  that is, $|\Phi\>=\sum_j \sqrt{\lambda_j}|jj\>$ with $\lambda_j\geq 0$ and $\sum_j \lambda_j=1$, then $|\psi\>$ can transform to $|\Phi\>$ under IO or SIO iff $\lambda(\Phi)\succ \mu(\psi)$.
\end{theorem}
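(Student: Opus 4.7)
My plan is to read the necessity of $\lambda(\Phi)\succ \mu(\psi)$ directly off \thref{thm:MC}, and then to obtain the converse implication in the Schmidt-form case by first reducing to a bipartite state of maximally correlated form via the generalized CNOT gate and then invoking the majorization criterion for pure-state coherence transformations.

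For necessity, fix any $f\in\scf$. Since $\outer{\Phi}{\Phi}$ is produced by some incoherent operation acting on $\outer{\psi}{\psi}\otimes\outer{0}{0}$, the supremum defining $\tilde{C}_f$ gives $E_f(\Phi)\leq \tilde{C}_f(\psi)=C_f(\psi)$, where the equality is exactly the content of \thref{thm:MC}. Specializing to pure states via \esref{eq:EntMeasurePure} and \eqref{eq:Cfpure} turns this into the scalar inequality $f(\mu(\psi))\geq f(\lambda(\Phi))$ for every $f\in\scf$. Since $\scf$ contains all functions of the form $p\mapsto \sum_i \phi(p_i)$ with $\phi:[0,1]\to\bbR$ concave, the Hardy--Littlewood--P\'olya characterization of majorization (applied after padding the shorter vector with zeros, per the paper's convention) forces $\mu(\psi)\prec \lambda(\Phi)$, which is the desired majorization.

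For the Schmidt-form case, the "only if" direction is immediate from the first part. For the converse, the plan is to apply the strictly incoherent unitary $\cnot$ to $|\psi\>\otimes |0\>$, obtaining the maximally correlated state $|\psi_\mrm{MC}\>=\sum_j \psi_j|jj\>$ whose coherence vector in the product reference basis coincides with $\mu(\psi)$ up to zero padding. Both $|\psi_\mrm{MC}\>$ and the target $|\Phi\>=\sum_j \sqrt{\lambda_j}|jj\>$ are diagonal in the reference basis, so completing the transformation amounts to a pure-state coherence transformation on $\mathcal{H}_\rmB\otimes\mathcal{H}_\rmA$. The majorization criterion for pure-state coherence transformations under IO (equivalently SIO), reviewed in Appendix~B, then says that such a transformation exists iff $\mu(\psi_\mrm{MC})\prec \mu(\Phi)$, i.e.\ iff $\mu(\psi)\prec \lambda(\Phi)$, matching the hypothesis.

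The main subtlety I foresee is the dimension mismatch: $\mu(\psi)\in\bbR^{d_\rmB}$ while $\lambda(\Phi)$ lives in $\bbR^{\min(d_\rmB,d_\rmA)}$, and the lifted coherence vectors on $\mathcal{H}_\rmB\otimes\mathcal{H}_\rmA$ live in a still larger space. The zero-padding convention makes both the Hardy--Littlewood--P\'olya step and the coherence-transformation majorization criterion directly applicable, but I would need to verify that the padding is harmless at each step. A second routine but necessary check is that the composite operation "$\cnot$ followed by the IO/SIO protocol realizing $|\psi_\mrm{MC}\>\to |\Phi\>$" remains incoherent (respectively strictly incoherent) on the joint system, so that the construction truly lies within the advertised class.
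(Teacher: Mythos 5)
Your proof is correct, but both halves take a noticeably different route from the paper's. For necessity, the paper argues directly at the level of majorization: \lref{lem:Majorization} gives $\lambda(\Phi)\succ\mu(\Phi)$, and the majorization criterion for coherence transformations (\thref{thm:CohMajorization}) gives $\mu(\Phi)\succ\mu(\psi)$, yielding the chain $\lambda(\Phi)\succ\mu(\Phi)\succ\mu(\psi)$. You instead pass to the scalar inequalities $f(\lambda(\Phi))=E_f(\Phi)\leq C_f(\psi)=f(\mu(\psi))$ for all $f\in\scf$ via \thref{thm:MC} and then invert with the Hardy--Littlewood--P\'olya characterization of majorization. This is valid (the additive concave functions $p\mapsto\sum_i\phi(p_i)$ do lie in $\scf$, and equal sums make the $\phi(0)=0$ normalization harmless), but it is a detour: the ingredients of \thref{thm:MC} are exactly \lref{lem:Majorization} and the monotonicity of $C_f$, so you are dualizing the paper's argument through the measures and back, at the cost of an extra appeal to HLP; the paper's version is shorter and also delivers the intermediate fact $\mu(\Phi)\succ\mu(\psi)$ explicitly. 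For sufficiency, you apply $\cnot$ first and then run the majorization criterion on the joint system $\mathcal{H}_\rmB\otimes\mathcal{H}_\rmA$ to carry $\sum_j\psi_j\ket{jj}$ to $\ket{\Phi}$, whereas the paper runs the criterion on system B alone to carry $\ket{\psi}$ to $\ket{\phi}=\sum_j\sqrt{\lambda_j}\ket{j}$ and only then applies $\cnot$. Both orderings compose to an IO/SIO on the joint system; the paper's keeps the majorization step in the smaller space, while yours fits more naturally with the maximally-correlated-state picture of \thref{thm:MC}. The dimension-mismatch and padding issues you flag are real but are exactly resolved by the paper's stated convention, as you anticipate.
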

\begin{proof}
	By \lref{lem:Majorization}, $\lambda(\Phi)\succ \mu(\Phi)\succ \mu(\psi)$, where  the second inequality follows from the coherence analog of the majorization criterion \cite{Niel99,DuBG15}; cf.~\thref{thm:CohMajorization} in Appendix~B.
	
	When $|\Phi\>=\sum_j \sqrt{\lambda_j}|jj\>$, let $|\phi\>=\sum_j\sqrt{\lambda_j}|j\>$.
	If  $\lambda(\Phi)\succ \mu(\psi)$, then $\mu(\phi)\succ \mu(\psi)$, so $|\psi\>$ can transform to $|\phi\>$   under SIO \cite{DuBG15} (cf.~\thref{thm:CohMajorization}), which implies the theorem given that
	$|\Phi\>=U_{\mrm{CNOT}}(|\phi\>\otimes |0\>)$.
\end{proof}

\begin{theorem}\label{thm:MaxProb}
Let $P(\psi\rightarrow\Phi)$ be the maximal probability of generating $|\Phi\>$  from $|\psi\>$ by IO (or SIO) acting on the system and an incoherent ancilla.	Then
\begin{equation}\label{eq:MaxProb}
P(\psi\rightarrow\Phi)\leq \min_{m\geq 0}\frac{\sum_{j\geq m} \mu^\downarrow_j(\psi)}{\sum_{j\geq m} \lambda^\downarrow_j(\Phi)},
\end{equation}
with equality
if $|\Phi\>$ has the Schmidt form $\sum_j \sqrt{\lambda_j}|jj\>$.
\end{theorem}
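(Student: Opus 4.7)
The strategy is a coherence analog of Vidal's probabilistic LOCC argument, anchored in the one-parameter family of symmetric concave functions
\begin{equation}
	f_m(p):=\sum_{j\ge m}p^\downarrow_j=1-\max_{|S|=m}\sum_{j\in S}p_j,\qquad m\ge 0,
\end{equation}
each of which lies in $\scf$ because the subtracted Ky Fan sum is a maximum of linear functionals and hence convex. For pure states one has $C_{f_m}(\psi)=\sum_{j\ge m}\mu^\downarrow_j(\psi)$ and $E_{f_m}(\Phi)=\sum_{j\ge m}\lambda^\downarrow_j(\Phi)$, i.e., exactly the numerator and denominator appearing in \eref{eq:MaxProb}.

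For the upper bound I would fix an arbitrary incoherent operation with Kraus operators $\{K_n\}$ acting on $\outer{\psi}{\psi}\otimes\outer{0}{0}$, write $p_n$ and $\sigma_n$ for the outcome probabilities and normalized pure post-states, and let $S$ collect the indices with $\sigma_n=\outer{\Phi}{\Phi}$, so that $\sum_{n\in S}p_n$ is the success probability of this particular protocol. Strong monotonicity of $C_{f_m}$ from \thref{thm:CohMeas}, combined with $C_{f_m}\ge 0$ on each $\sigma_n$, gives
\begin{equation}
	C_{f_m}(\psi)=C_{f_m}\!\bigl(\outer{\psi}{\psi}\otimes\outer{0}{0}\bigr)\ge\sum_n p_n C_{f_m}(\sigma_n)\ge\Bigl(\sum_{n\in S}p_n\Bigr)C_{f_m}(\Phi),
\end{equation}
and \thref{thm:EntCoh} supplies $C_{f_m}(\Phi)\ge E_{f_m}(\Phi)$. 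Maximizing over protocols and then minimizing over $m$ delivers \eref{eq:MaxProb}.

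For the equality case with $|\Phi\>=\sum_j\sqrt{\lambda_j}|jj\>$, I would set $|\phi\>=\sum_j\sqrt{\lambda_j}|j\>$ so that $|\Phi\>=U_{\mrm{CNOT}}(|\phi\>\otimes|0\>)$. Since $U_{\mrm{CNOT}}$ is strictly incoherent and deterministic, it may be freely appended to (or peeled off from) any incoherent protocol without altering its success probability, so $P(\psi\to\Phi)=P(\psi\to\phi)$. It then remains to exhibit an explicit SIO protocol converting $|\psi\>$ into $|\phi\>$ with probability $\min_{m}\bigl(\sum_{j\ge m}\mu^\downarrow_j(\psi)\bigr)/\bigl(\sum_{j\ge m}\mu^\downarrow_j(\phi)\bigr)$ --- the probabilistic refinement of the deterministic majorization criterion recalled in \thref{thm:CohEntTran} --- which I would build via the coherence analog of Vidal's Nielsen-style two-level elementary moves on the sorted coherence vector of $|\psi\>$. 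The main obstacle is precisely this achievability step: verifying that each elementary move stays within SIO on the joint system-ancilla space and that the composed success probabilities multiply to saturate the ratio. The upper bound itself, by contrast, follows essentially for free once the family $\{C_{f_m}\}$ is recognized as the right strong monotones.
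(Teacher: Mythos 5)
Your argument follows the paper's proof essentially step for step: the same family $f_m(p)=\sum_{j\geq m}p^\downarrow_j$, the same monotonicity chain $P(\psi\rightarrow\Phi)\leq C_{f_m}(\psi)/C_{f_m}(\Phi)\leq C_{f_m}(\psi)/E_{f_m}(\Phi)$ for the upper bound, and the same reduction of the equality case to the single-system transformation $\psi\rightarrow\phi$ via the generalized CNOT together with $\mu(\phi)\simeq\lambda(\Phi)$. The one step you flag as the "main obstacle"---that $P(\psi\rightarrow\phi)$ under IO or SIO equals $\min_{m}\bigl(\sum_{j\geq m}\mu^\downarrow_j(\psi)\bigr)/\bigl(\sum_{j\geq m}\mu^\downarrow_j(\phi)\bigr)$---is not reconstructed in the paper either: it is exactly \thref{thm:MaxProInchoherent} in Appendix~B, imported from \rcite{DuBQ15}, so you can close your proof by citation rather than by rebuilding the Vidal-style elementary moves.
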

\Thref{thm:MaxProb}  implies that  the Schmidt rank of $|\Phi\>$ can not exceed  the coherence rank of $|\psi\>$ even probabilistically.
\begin{proof}
	Define $f_m(p)=\sum_{j\geq m} p^\downarrow_j$ for  positive integers $m$.
	Then  $E_{f_m}$ and $C_{f_m}$  are entanglement measures and coherence measures according to \thsref{thm:EntMeasMod} and~\ref{thm:CohMeas}; cf.~\rscite{Vida00,DuBQ15}. 	Therefore,
	\begin{equation}
	P(\psi\rightarrow\Phi)\leq \frac{C_{f_m}(\psi)}{C_{f_m}(\Phi)}\leq \frac{C_{f_m}(\psi)}{E_{f_m}(\Phi)}=\frac{\sum_{j\geq m} \mu^\downarrow_j(\psi)}{\sum_{j\geq m} \lambda^\downarrow_j(\Phi)},
	\end{equation}
	which verifies \eref{eq:MaxProb} since $\sum_{j\geq 0} \mu^\downarrow_j(\psi)=\sum_{j\geq 0} \lambda^\downarrow_j(\Phi)$.
	
	When $|\Phi\>=\sum_j \sqrt{\lambda_j}|jj\>$, let $\phi=\sum_j\sqrt{\lambda_j}|j\>$. Then
	$|\Phi\>=U_{\mrm{CNOT}}(|\phi\>\otimes |0\>)$. Therefore
	\begin{equation}
	P(\psi\rightarrow\Phi)\geq P(\psi\rightarrow\phi)
	=\min_{m\geq 0} \frac{\sum_{j\geq m} \mu^\downarrow_j(\psi)}{\sum_{j\geq m} \mu^\downarrow_j(\phi)}.
	\end{equation}
	Here $P(\psi\rightarrow\phi)$ is the maximal probability of transforming $|\psi\>$ to $|\phi\>$ under IO (or SIO), 
	which was determined in \rcite{DuBQ15}; cf.~\thref{thm:MaxProInchoherent} in  Appendix~B. So   the inequality in \eref{eq:MaxProb} is saturated given that $\mu(\phi)\simeq\lambda(\Phi)$.
\end{proof}

\section{\label{sec:gconcurrence}Lower bounds on generalized coherence  concurrence and entanglement concurrence}
The connection between coherence and entanglement established in this work is useful not only to  theoretical studies of resource theories, but also to   practical applications in quantum information processing. By virtue of this connection, many results on entanglement  detection and quantification can  be translated  to the coherence setting, and vice versa.  As an illustration, we provide   tight observable lower bounds for the generalized entanglement  concurrence $E_\gc$~\cite{Gour05} and its coherence analog $C_\gc$ \cite{Chin17}, which correspond to the convex-roof measures $E_f$ and $C_f$ with $f(p)=d(\prod_j p_j)^{1/d}$.  Note that the definitions  of $E_\gc$ and $C_\gc$ depend explicitly on the dimension, unlike most other measures considered in this paper. 
The measure $E_\gc$ quantifies entanglement of the maximal dimension and may serve as a dimension witness.  The analog $C_\gc$ is equally important in the study of coherence.

Before presenting our main result in this section, we need to review a few coherence and entanglement measures. The $l_1$-norm  coherence
\begin{equation}
C_{l_1}(\rho): =\sum_{j\neq k}|\rho_{jk}|=\sum_{j, k}|\rho_{jk}|-1
\end{equation}
is the simplest and one of the most useful coherence measures \cite{Baumgratz14}. 
The robustness of coherence is an observable coherence measure defined as 
\begin{equation}
C_\caR(\rho):=\min\left\{x \Big|x\geq0, \; \exists \mbox{ a state } \sigma,\; \frac{\rho+x\sigma}{1+x}\in \caI \right\},
\end{equation}
where $\caI$ denotes the set of incoherent states. It has an operational interpretation in connection with  the task of phase discrimination \cite{Napoli16,Piani16}.
When $\rho$ is pure, it is known that $C_\caR(\rho)=C_{l_1}(\rho)$ \cite{Piani16,ZhuHC17C}.

The negativity of a bipartite state $\rho$ shared by B and A reads
\begin{equation}
\caN(\rho):=\tr|\rho^{\rmT_\rmA}|-1,
\end{equation}
where $\rmT_\rmA$ denotes the partial transpose on subsystem~A (the definition in some literature differs by a factor of~2). It  is essentially the only useful entanglement measure that is easily computable in general \cite{Horodecki09}. 
The robustness of entanglement is defined as 
\begin{equation}
E_\caR(\rho):=\min\left\{x \Big|x\geq0, \; \exists \mbox{ a state } \sigma,\; \frac{\rho+x\sigma}{1+x}\in \mathcal{S} \right\},
\end{equation}
where $\caS$ denotes the set of separable states. This measure has two variants: $\sigma$ is required to be separable in one variant, but could be arbitrary in the other variant \cite{Horodecki09}. When $\rho$ is pure, both  variants are equal to the negativity. \Thref{thm:EgC} below  is applicable to both cases. 

\subsection{Tight observable lower bounds}

\begin{theorem}\label{thm:CgC}Any state $\rho$ in dimension $d$ satisfies
	\begin{equation}\label{eq:CgC}
	C_\gc(\rho)+(d-2)\geq \hat{C}_{l_1}(\rho)\geq C_{l_1}(\rho)\geq C_\caR(\rho).
	\end{equation}
\end{theorem}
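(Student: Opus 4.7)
The plan is to prove the three inequalities in the chain separately. First I would dispose of the two easy steps on the right; the deep step is the leftmost inequality, which I would reduce to a pure-state inequality and address by Lagrange multipliers.

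For the rightmost step $C_{l_1}(\rho)\geq C_\caR(\rho)$, I would exhibit an explicit pseudomixture witnessing the robustness bound. Setting $x:=C_{l_1}(\rho)$, let $\sigma$ be the Hermitian matrix whose off-diagonal part equals $-\rho_{\mrm{off}}/x$ and whose diagonal entries are $\sigma_{jj}:=x^{-1}\sum_{k\neq j}|\rho_{jk}|$. Each row of $\sigma$ is diagonally dominant, with the diagonal equal to the sum of absolute values of the off-diagonals in that row, so $\sigma\succeq 0$ by Gershgorin and $\tr\sigma=1$ by construction. The off-diagonals of $\rho+x\sigma$ cancel, leaving a diagonal matrix with nonnegative entries, hence $(\rho+x\sigma)/(1+x)\in\caI$ and $C_\caR(\rho)\leq x$. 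The middle step $\hat C_{l_1}(\rho)\geq C_{l_1}(\rho)$ is immediate from convexity of $C_{l_1}$ (a norm on the off-diagonal part of $\rho$): averaging over any pure decomposition and taking the infimum yields the claim.

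For the leftmost step $C_\gc(\rho)+(d-2)\geq\hat C_{l_1}(\rho)$, I would reduce to the pure-state inequality $C_\gc(\psi)+(d-2)\geq C_{l_1}(\psi)$ via convex-roof structure. Evaluating that bound on an optimal decomposition $\rho=\sum_ip_i\ketbra{\psi_i}{\psi_i}$ for $C_\gc$ gives
\begin{equation*}
C_\gc(\rho)+(d-2)=\sum_ip_i\bigl[C_\gc(\psi_i)+(d-2)\bigr]\geq\sum_ip_iC_{l_1}(\psi_i)\geq\hat C_{l_1}(\rho).
\end{equation*}
Writing $r_j:=|\psi_j|$ with $\sum_jr_j^2=1$, the pure-state bound becomes $(\sum_jr_j)^2\leq d(\prod_jr_j)^{2/d}+d-1$. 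If some $r_j=0$, the geometric-mean term vanishes and Cauchy--Schwarz on the at most $d-1$ remaining variables gives $(\sum_jr_j)^2\leq d-1$, matching the right-hand side. In the interior all $r_j>0$, and the Lagrange stationarity condition on the sphere $\sum_jr_j^2=1$ reduces to a quadratic $(1+\lambda)r_j^2-sr_j+g^2=0$ in each coordinate (with $s:=\sum_jr_j$, $g:=(\prod_jr_j)^{1/d}$), so any critical point takes at most two distinct values among $r_1,\ldots,r_d$; the resulting two-parameter family is checked directly and the maximum of the left-minus-right side is attained, with value zero, at the uniform point $r_j=1/\sqrt d$.

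The hard part will be the pure-state inequality itself. Since it is saturated at the maximally coherent state, coarse one-shot estimates go the wrong way: AM--GM gives $(\sum_jr_j)^2\geq d^2(\prod_jr_j)^{2/d}$, and Maclaurin's inequality yields $C_{l_1}(\psi)\geq(d-1)C_\gc(\psi)$ -- both inequalities point away from the desired direction. A careful critical-point enumeration along the two-value family, together with the boundary case above, appears to be the only clean route, unless one can find an ad hoc symmetric-function identity that packages the calculation more transparently.
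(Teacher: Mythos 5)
Your architecture matches the paper's: the middle inequality is convexity of $C_{l_1}$, the right inequality is the Piani--et al.\ bound, and the left inequality is reduced to the pure-state scalar inequality via the convex-roof structure and then attacked by a Lagrange-multiplier critical-point analysis. Two parts of your write-up are actually more self-contained than the paper's: the explicit pseudomixture $\sigma$ with off-diagonal part $-\rho_{jk}/x$ and diagonally dominant completion is a correct proof of $C_{l_1}(\rho)\geq C_{\caR}(\rho)$ (the paper merely cites it), modulo the trivial degenerate case $x=0$; and your side remarks that AM--GM and Maclaurin point the wrong way correctly identify why the pure-state inequality is delicate. The reduction identity $C_\gc(\rho)+(d-2)=\sum_i p_i[C_\gc(\psi_i)+(d-2)]\geq\sum_i p_i C_{l_1}(\psi_i)\geq \hat C_{l_1}(\rho)$ and the boundary case (some $r_j=0$, Cauchy--Schwarz on $d-1$ variables) are both correct, and your stationarity computation showing each $r_j$ solves a common quadratic, hence takes at most two values, is right.

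The genuine gap is the sentence ``the resulting two-parameter family is checked directly.'' That check \emph{is} the theorem: with $k$ coordinates equal to $a$ and $d-k$ equal to $b$ under $ka^2+(d-k)b^2=1$, you must verify $\bigl(ka+(d-k)b\bigr)^2-d\,(a^kb^{d-k})^{2/d}\leq d-1$ for every $1\leq k\leq d-1$, and nothing you have written does this or reduces the number of cases. The paper closes exactly this hole in two moves (Appendix~D): first a perturbation argument --- if $1<k<d-1$... more precisely if fewer than $d-1$ coordinates share the larger value, one can adjust three of the $c_j$ so as to decrease the product while preserving both $\sum_j c_j$ and $\sum_j c_j^2$ --- forcing the extremal configuration to have $k=d-1$; second, a one-variable calculus analysis of $g(b)=d\,b^{2/d}+(d-2)b^2-2(d-1)b$ on $[0,1]$, showing via the sign pattern of $g'$ that its minimum is attained only at $b=0$ or $b=1$, where $g=0$. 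Your claim that the maximum is attained ``at the uniform point'' is also not quite the full picture: for $d\geq3$ equality also holds at configurations with one vanishing coordinate and the rest equal, which is why the boundary and interior analyses must be stitched together (the paper does this by splitting on the value of $s=\sum_j r_j$ relative to $\sqrt{d-1}$ and $\sqrt d$). Until you supply either the perturbation argument or a direct verification over all $k$, the leftmost inequality is asserted rather than proved.
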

\begin{theorem}\label{thm:EgC}Any $d\times d$ bipartite state $\rho$ satisfies 
	\begin{equation}\label{eq:EgC}
	E_\gc(\rho)+(d-2)\geq \hat{\caN}(\rho)\geq \max\{\caN(\rho),E_\caR(\rho)\}.
	\end{equation}
\end{theorem}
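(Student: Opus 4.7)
The plan is to reduce Theorem~\ref{thm:EgC} to the (assumed) pure-state case of Theorem~\ref{thm:CgC} via the Schmidt decomposition, and then to extend everything to mixed states by standard convex-roof arguments together with the convexity of $\caN$ and $E_\caR$.

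First I would establish the pure-state inequality $E_\gc(\psi)+(d-2)\geq \caN(\psi)$. Writing the Schmidt decomposition $|\psi\>=\sum_{j=0}^{d-1}\sqrt{\lambda_j}|jj\>$, a direct computation of $(|\psi\>\<\psi|)^{\rmT_\rmA}=\sum_{j,k}\sqrt{\lambda_j\lambda_k}\,|jk\>\<kj|$ decomposes the partial transpose into the diagonal part with eigenvalues $\{\lambda_j\}$ and, for each pair $j<k$, a $2\times 2$ block with eigenvalues $\pm\sqrt{\lambda_j\lambda_k}$. Hence $\tr|\rho^{\rmT_\rmA}|=\bigl(\sum_j\sqrt{\lambda_j}\bigr)^2$, so $\caN(\psi)=\bigl(\sum_j\sqrt{\lambda_j}\bigr)^2-1$. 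Setting $|\phi\>=\sum_j\sqrt{\lambda_j}|j\>$, this identifies
\begin{equation}
\caN(\psi)=C_{l_1}(\phi),\qquad E_\gc(\psi)=d\Bigl(\prod_j\lambda_j\Bigr)^{1/d}=C_\gc(\phi).
\end{equation}
The pure-state case of Theorem~\ref{thm:CgC} (for which $\hat{C}_{l_1}(\phi)=C_{l_1}(\phi)$) then reads $C_\gc(\phi)+(d-2)\geq C_{l_1}(\phi)$, which translates directly to $E_\gc(\psi)+(d-2)\geq \caN(\psi)$.

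Next I would extend to mixed states by convex roof. Given any $\varepsilon>0$, take a pure-state decomposition $\rho=\sum_j p_j|\psi_j\>\<\psi_j|$ with $\sum_j p_j E_\gc(\psi_j)\leq E_\gc(\rho)+\varepsilon$. Summing the pure-state bound with weights $p_j$ gives
\begin{equation}
E_\gc(\rho)+\varepsilon+(d-2)\geq \sum_j p_j\caN(\psi_j)\geq \hat{\caN}(\rho),
\end{equation}
where the last step is the definition of $\hat{\caN}$. Letting $\varepsilon\to 0$ yields the first inequality of the theorem.

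For the second inequality I would separately bound $\caN(\rho)$ and $E_\caR(\rho)$ by $\hat{\caN}(\rho)$. Convexity of the negativity is standard (it is a seminorm-based quantity through $\tr|\cdot^{\rmT_\rmA}|-1$), so for any pure decomposition $\rho=\sum_j p_j|\psi_j\>\<\psi_j|$ we have $\caN(\rho)\leq\sum_j p_j\caN(\psi_j)$; minimizing over decompositions gives $\caN(\rho)\leq\hat{\caN}(\rho)$. For the robustness, the excerpt notes $E_\caR(\psi)=\caN(\psi)$ on pure states (for both the standard and generalized variants), and $E_\caR$ is itself convex, so
\begin{equation}
E_\caR(\rho)\leq\sum_j p_j E_\caR(\psi_j)=\sum_j p_j\caN(\psi_j),
\end{equation}
and again minimizing yields $E_\caR(\rho)\leq\hat{\caN}(\rho)$. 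Combining these with the first inequality completes the chain.

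The only nontrivial analytic input is the pure-state bound $C_\gc(\phi)+(d-2)\geq C_{l_1}(\phi)$ from Theorem~\ref{thm:CgC}, which I am assuming; the remaining steps are all routine convex-roof and convexity arguments. The main conceptual obstacle in carrying out the plan is recognising that the negativity of a pure bipartite state and the $l_1$-norm coherence of its ``Schmidt vector state'' coincide, which is what makes the Schmidt reduction collapse the bipartite statement onto the single-system coherence statement already proved.
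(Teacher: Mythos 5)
Your proposal is correct and follows essentially the same route as the paper: reduce to pure states via the convex-roof structure, use the Schmidt decomposition to identify $\caN(\psi)=C_{l_1}(\phi)$ and $E_\gc(\psi)=C_\gc(\phi)$ for the associated single-system state $|\phi\>=\sum_j\sqrt{\lambda_j}|j\>$, invoke Theorem~\ref{thm:CgC}, and handle $\caN$ and $E_\caR$ by convexity plus their coincidence on pure states. The only difference is that you spell out the partial-transpose spectrum and the $\varepsilon$-optimal decomposition explicitly, which the paper leaves as "straightforward to verify."
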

Here $\hat{C}_{l_1}$ is the common convex-roof extension of $C_{l_1}$ and $C_\caR$, while $\hat{\caN}$ is the common convex-roof extension of $\caN$ and $E_\caR$. 
The inequality $C_{l_1}(\rho)\geq C_\caR(\rho)$ in \eref{eq:CgC} was derived in \rcite{Piani16}. To elucidate the connection between \thref{thm:CgC} and  \thref{thm:EgC}, let $\rho$ be a state in dimension $d$, and 
$\rho_{\mrm{MC}}:=\cnot\left[\rho\otimes \outer{0}{0}\right]$ be a $d\times d$ bipartite state. 
Then  $E_\gc(\rho_{\mrm{MC}})=C_\gc(\rho)$ according to   \thref{thm:MC}.  In addition, 
 $E_\caR(\rho_{\mrm{MC}})=C_\caR(\rho)$ according to \rcite{ZhuHC17C}; also, it is easy to verify that  $\caN(\rho_{\mrm{MC}})=C_{l_1}(\rho)$. So we have a perfect analogy between  \thref{thm:CgC} and  \thref{thm:EgC}.
 
\begin{remark}
In the above discussion,  $C_\gc(\rho)=C_f(\rho)$ with $f(p)=d(\prod_j p_j)^{1/d}$; however,  $C_\gc(\rho_{\mrm{MC}})$ is in general not
equal to  $C_f(\rho_{\mrm{MC}})$. \Thref{thm:MC} implies the equality  $E_\gc(\rho_{\mrm{MC}})=C_\gc(\rho)$, 
but cannot guarantee the equality  $C_\gc(\rho_{\mrm{MC}})=C_\gc(\rho)$ except when $C_\gc(\rho)=0$. This subtlety is tied to the fact that the definition of $C_\gc$ depends explicitly on the dimension. 
\end{remark}

All the inequalities  in \esref{eq:CgC} and \eqref{eq:EgC} can be saturated by certain states with high symmetry, as demonstrated in \sref{sec:GCsym} later.   \Thref{thm:EgC} was partially inspired by \rcite{SentEGH16}. Compared with the  lower bound for $E_\gc(\rho)$ derived in \rcite{SentEGH16},  our bound presented in \thref{thm:EgC} is much simpler and usually tighter. 
The significance  of  \thref{thm:EgC} is further strengthened by the fact that both $\caN(\rho)$ and  $E_\caR(\rho)$ are observable entanglement measures. For example, in certain scenarios of practical interest, such as in quantum simulators based on trapped ions or superconductors, a tight lower bound for $\caN(\rho)$ can be derived by measuring a single witness operator \cite{MartCP16}. In addition, $\caN(\rho)$ can be estimated in a device-independent way \cite{MoroBLH13}. Thanks to \thref{thm:EgC},   these methods can now be applied to  bound  $E_\gc(\rho)$ from below. Similarly, $C_\caR(\rho)$ can be estimated by measuring suitable  witness operators \cite{Napoli16,Piani16}, from which  we can derive  a lower bound for $C_\gc(\rho)$.

\begin{proof}[Proof of \thref{thm:CgC}]
	The inequality 	$\hat{C}_{l_1}(\rho)\geq C_{l_1}(\rho)$ follows from the convexity of $C_{l_1}$ and the definition of the convex roof. The inequality $C_{l_1}(\rho)\geq C_\caR(\rho)$ was derived in \rcite{Piani16}. 
	
	To prove the inequality $C_\gc(\rho)+(d-2)\geq \hat{C}_{l_1}(\rho)$, it suffices to consider the case in which $\rho$ is pure because both $C_\gc(\rho)$ and  $\hat{C}_{l_1}(\rho)$ are based on the convex roof. 
	Let $\rho=|\psi\>\<\psi|$  with $|\psi\>=\sum_j c_j |j\>$ and $\sum_j |c_j|^2=1$. Then  we have 
	\begin{align}
	&C_\gc(\rho)+(d-2)=d\biggl|\prod_j c_j \biggr|^{2/d}+(d-2)\nonumber\\
	&\geq \biggl(\sum_j  |c_j|\biggr)^2-1
	=C_{l_1}(\rho)=\hat{C}_{l_1}(\rho),
	\end{align}
	where the inequality follows from \lref{lem:ProductBound} below. 
\end{proof}

\begin{proof}[Proof of \thref{thm:EgC}]
	The inequality $\hat{\caN}(\rho)\geq \caN(\rho)$ is obvious. The inequality  $\hat{\caN}(\rho)\geq E_\caR(\rho)$ follows from the fact that the negativity and robustness of entanglement are convex and that  they coincide on pure states, so  they share the same convex roof, that is, $\hat{\caN}(\rho)=\hat{E}_\caR(\rho)$.

	To prove the inequality $E_\gc(\rho)+(d-2)\geq \hat{\caN}(\rho)$, 	 it suffices to consider the case in which $\rho$ is pure, as in the proof of \thref{thm:CgC}. Applying a local unitary transformation if necessary, we may assume that 
	$\rho$  has the form $\rho=|\Psi\>\<\Psi|$ with $|\Psi\>=\sum_j c_j|jj\>$, so that $\rho$ is  maximally correlated. 
	Let $\varrho=|\psi\>\<\psi|$ with $|\psi\>=\sum_j c_j|j\>$. 
Then  it is straightforward to verify that 
	$E_\gc(\rho)=C_{\gc}(\varrho)$  (cf.~\thref{thm:MC}) and $\caN(\rho)=C_{l_1}(\varrho)$. 
	Now the inequality  $E_\gc(\rho)+(d-2)\geq \hat{\caN}(\rho)$ follows from \thref{thm:CgC}.
\end{proof}

The following lemma was essentially   proved in the Supplemental Material of \rcite{SentEGH16}, though this result was not highlighted there. See  Appendix~D for a self-contained proof. 
\begin{lemma}\label{lem:ProductBound}
	Any sequence of $d$ complex numbers $c_0, c_1, \ldots, c_{d-1}$ satisfies
	\begin{equation}\label{eq:ProductBound}
	d\biggl|\prod_j c_j \biggr|^{2/d}\geq \biggl(\sum_j  |c_j|\biggr)^2-(d-1)\sum_j |c_j|^2. 
	\end{equation}
	When $d\geq3$, the inequality is saturated iff all $|c_j|$ are equal,  or all of them are equal except for one of them, which  equals 0.  
\end{lemma}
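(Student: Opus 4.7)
The plan is to reduce to the nonnegative reals $a_j := |c_j|$ (since only $|c_j|$ appears) and to prove that
\begin{equation*}
F(a) := d\Bigl(\prod_j a_j\Bigr)^{\!2/d} + (d-1)\sum_j a_j^2 - \Bigl(\sum_j a_j\Bigr)^{\!2} \geq 0
\end{equation*}
for every $a \in \mathbb{R}_{\geq 0}^d$, with the stated equality cases when $d \geq 3$. The case $d=2$ is immediate, since the inequality collapses to an identity.

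The main tool will be a pairwise smoothing argument. Fix all coordinates of $a$ except $a_i, a_j$, and also fix $s := a_i + a_j$. Writing $a_i = s/2 - \delta$ and $a_j = s/2 + \delta$, both $a_i a_j = s^2/4 - \delta^2$ and $a_i^2 + a_j^2 = s^2/2 + 2\delta^2$ depend only on $u := \delta^2 \in [0, s^2/4]$, while $\sum_k a_k$ is constant. Hence $F$ restricted to this segment is a one-variable function $F(u)$, and a short differentiation gives
\begin{equation*}
F''(u) = -\frac{2(d-2)}{d}\,P^{2/d}\,\Bigl(\tfrac{s^2}{4} - u\Bigr)^{\!2/d-2},\quad P := \prod_{k\neq i,j} a_k,
\end{equation*}
which is $\leq 0$ for $d \geq 3$ and strictly negative when $P > 0$. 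Consequently the minimum of $F(u)$ on $[0,s^2/4]$ is attained at an endpoint: either $u = 0$ (so $a_i = a_j$) or $u = s^2/4$ (so one of $a_i, a_j$ vanishes). In either case one obtains a new point $a'$ with $F(a') \leq F(a)$ having either one more zero or one more coincidence among the nonzero entries.

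I would then iterate this reduction, measuring configurations by the pair $(-k, m)$ in lexicographic order, where $k$ is the number of zero entries and $m$ the number of distinct positive entries. A push-to-zero strictly increases $k$, while a merge leaves $k$ unchanged and strictly decreases $m$, so after finitely many steps one reaches a canonical configuration in which $k$ entries equal $0$ and the remaining $d-k$ all equal some $c \geq 0$, with $F$ only having decreased. A direct evaluation yields $F = c^2(d-k)(k-1)$ for $k \geq 1$ and $F = 0$ for $k = 0$, which is nonnegative and vanishes precisely when $k \in \{0, 1\}$. This proves \eref{eq:ProductBound}. For the equality statement with $d \geq 3$, strict concavity of $F(u)$ whenever $P > 0$ forces any nontrivial smoothing step to strictly decrease $F$; hence $F(a) = 0$ requires $a$ itself to already be canonical with $k \in \{0,1\}$, which is exactly the stated condition. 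The one delicate point is the degenerate case $P = 0$, where $F(u)$ becomes linear in $u$ and the reduction still lands at an endpoint; this is handled by an outer induction on $d$ with base case $d=2$, which I expect to be the main bookkeeping obstacle of the argument.
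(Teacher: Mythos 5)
Your route is genuinely different from the paper's. The paper normalizes $\sum_j c_j^2=1$, fixes $s=\sum_j c_j$, and uses Lagrange multipliers plus an exchange argument to show that a minimizer takes only two distinct values with the smaller one of multiplicity one, reducing everything to a one-variable function $g(b)=db^{2/d}+(d-2)b^2-2(d-1)b$. Your pairwise smoothing in $u=\delta^2$ at fixed $a_i+a_j$ is correct as a local computation: the formula $F''(u)=-\tfrac{2(d-2)}{d}P^{2/d}(s^2/4-u)^{2/d-2}\le 0$ checks out, the degenerate case $P=0$ makes $F(u)=2(d-1)u+\mathrm{const}$ strictly increasing, so in all cases the segment minimum sits at an endpoint; the canonical-form evaluation $F=(d-k)(k-1)c^2$ is right; and your equality analysis is arguably cleaner than the paper's, since strict concavity (or strict monotonicity when $P=0$) shows directly that any configuration with two distinct positive entries has $F>0$ once the inequality itself is in hand.

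The genuine gap is the termination of the iteration. The lexicographic measure $(-k,m)$ is \emph{not} monotone under merge steps: for the configuration $(1,3,1,3)$ every pair with distinct values is a $(1,3)$ pair, and merging one of them yields $(2,2,1,3)$, so $k$ is unchanged while $m$ increases from $2$ to $3$. Thus ``after finitely many steps one reaches a canonical configuration'' is not established; a merge does strictly decrease $\sum_j a_j^2$ at fixed $\sum_j a_j$, which rules out cycling but not an infinite sequence of merges, so finite termination genuinely needs an argument (and your proposed outer induction on $d$ does not address this, since the problematic merges occur among strictly positive entries). The standard repair makes the iteration unnecessary: $F$ is homogeneous of degree $2$, so restrict to the compact set $\{a\ge 0,\ \sum_j a_j=1\}$, on which the continuous $F$ attains its minimum at some $a^*$; your segment computation shows that if $a^*$ had two distinct positive entries it would be an interior point of a segment on which $F$ is either strictly concave or strictly increasing, contradicting minimality. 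Hence $a^*$ is canonical, $F(a^*)=(d-k)(k-1)c^2\ge0$, and \eref{eq:ProductBound} follows; your equality-case argument then goes through verbatim.
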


\subsection{\label{sec:GCsym}Generalized concurrence of states with high symmetry}
In this section we derive generalized coherence  concurrence and entanglement concurrence of certain states with high symmetry and thereby show that the lower  bounds for $C_\gc(\rho)$
and  $E_\gc(\rho)$ established in \thsref{thm:CgC} and \ref{thm:EgC} are tight.

Let $\rho$ be a convex combination of the maximally coherent state and the completely mixed state in dimension~$d$,
that is, 
\begin{equation}\label{eq:rhoGC}
\rho=p(|\psi\<\psi|)+(1-p)\frac{I}{d},  \quad |\psi\>=\frac{1}{\sqrt{d}}\sum_j |j\>, 
\end{equation}
with $0\leq p\leq 1$. Let  $F:=\<\psi|\rho|\psi\>=p+\frac{1-p}{d}$ be the fidelity between $\rho$ and $|\psi\>\<\psi|$; then $1/d\leq F\leq 1$. The following proposition is proved in Appendix~E.
\begin{proposition}\label{pro:GCsym} The state $\rho$ in \eref{eq:rhoGC} with $0\leq p\leq 1$ satisfies 
\begin{align}
\hat{C}_{l_1}(\rho)&= C_{l_1}(\rho)= C_\caR(\rho)=p(d-1)=d F-1, \label{eq:l1RoCsym}
\\ 
C_\gc(\rho)&=\max\{0, p(d-1)-(d-2)\}\nonumber\\
&=\max\{0, dF-(d-1)\}, \label{eq:GCsym}
\end{align}
\end{proposition}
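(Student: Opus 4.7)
\textbf{Proof proposal for \pref{pro:GCsym}.}

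My plan is to verify \eref{eq:l1RoCsym} quantity by quantity, then obtain \eref{eq:GCsym} by combining \thref{thm:CgC} with an explicit pure-state decomposition of $\rho$. The matrix entries of $\rho$ are $\rho_{jj}=1/d$ and $\rho_{jk}=p/d$ for $j\neq k$, so $C_{l_1}(\rho)=d(d-1)\cdot(p/d)=p(d-1)=dF-1$ is immediate. The decomposition $\rho=p\ketbra{\psi}{\psi}+\frac{1-p}{d}\sum_j\ketbra{j}{j}$ has average $p(d-1)+0=C_{l_1}(\rho)$ since $C_{l_1}(\ket{\psi})=d-1$ and $C_{l_1}(\ket{j})=0$, and convexity of $C_{l_1}$ gives the matching lower bound $\hat{C}_{l_1}(\rho)\geq C_{l_1}(\rho)$, so $\hat{C}_{l_1}(\rho)=p(d-1)$. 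The inequality $C_\caR(\rho)\leq C_{l_1}(\rho)$ is already cited in the paper from \rcite{Piani16}; for the matching lower bound I would use the dual witness $W=d\ketbra{\psi}{\psi}$, which is positive semidefinite with $\<j|W|j\>=1$ for every $j$ (hence feasible for the standard SDP dual of the robustness) and yields $\tr(W\rho)-1=dF-1=p(d-1)$.

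For \eref{eq:GCsym}, the lower bound $C_\gc(\rho)\geq\max\{0,dF-(d-1)\}$ is immediate from \thref{thm:CgC} together with nonnegativity of $C_\gc$. The matching upper bound is the main work, and I would realize it with a permutation-symmetric pure-state decomposition built from three ingredients: the basis states $\ket{j}$ (coherence rank $1$, so $C_\gc=0$), the uniform $(d-1)$-subset superpositions $\ket{\phi_T}=\frac{1}{\sqrt{d-1}}\sum_{j\in T}\ket{j}$ for $|T|=d-1$ (one vanishing component, so $C_\gc=0$), and $\ket{\psi}$ itself (with $C_\gc(\ket{\psi})=1$, the only contributor to the average). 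Counting the $(d-1)$-subsets through each pair of indices gives
\begin{equation*}
\frac{1}{d}\sum_{|T|=d-1}\ketbra{\phi_T}{\phi_T}=\frac{1}{d(d-1)}I+\frac{d-2}{d-1}\ketbra{\psi}{\psi},
\end{equation*}
so the ansatz
\begin{equation*}
\rho=\lambda_1\frac{I}{d}+\frac{\lambda_2}{d}\sum_{|T|=d-1}\ketbra{\phi_T}{\phi_T}+\lambda_3\ketbra{\psi}{\psi}
\end{equation*}
respects the $(I,\ketbra{\psi}{\psi})$-structure of $\rho$ and, after imposing the two coefficient equations together with $\sum_i\lambda_i=1$, leaves a single free parameter.

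Since only $\lambda_3$ contributes to the average $C_\gc$, the best decomposition minimizes $\lambda_3$ subject to $\lambda_1,\lambda_2,\lambda_3\geq 0$; solving this one-parameter linear program yields $\lambda_3=\max\{0,p(d-1)-(d-2)\}$, precisely matching the lower bound. The main obstacle is identifying the correct mixing ingredients: $C_\gc$ vanishes only on pure states with at least one zero component, while the symmetry and off-diagonal content of $\rho$ force the decomposition to use states with the largest possible overlap with $\ket{\psi}$ subject to that constraint. The $(d-1)$-subset superpositions $\ket{\phi_T}$ are exactly those extremal states, achieving the maximal fidelity $(d-1)/d$ with $\ket{\psi}$ among all pure states with vanishing $C_\gc$; omitting them (and using only basis states and $\ket{\psi}$) would prove the claimed formula only in the narrower regime $p\leq 1/(d-1)$.
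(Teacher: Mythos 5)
Your proof is correct and follows essentially the same route as the paper: your states $\ket{\phi_T}$ with $|T|=d-1$ are exactly the paper's $\ket{\varphi_j}$, your three-parameter mixture reproduces its optimal decomposition, and the matching lower bound comes from \thref{thm:CgC} in both cases. The one genuine (and valid) substitution is your explicit dual witness $W=d\ketbra{\psi}{\psi}$ for the lower bound on $C_\caR$, where the paper instead cites Theorem~6 of \rcite{Piani16} or invokes permutation symmetry; note only that your closing aside about the basis-states-plus-$\ket{\psi}$ decomposition working for $p\leq 1/(d-1)$ is off (that cruder decomposition gives the upper bound $p$, which matches the claimed value only at $p=0$ or $p=1$), but this does not affect the proof.
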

\Pref{pro:GCsym} shows that all the inequalities in \eref{eq:CgC} of \thref{thm:CgC}  are saturated  by the state $\rho$ in \eref{eq:rhoGC} with $(d-2)/(d-1)\leq p\leq 1$, that is, $(d-1)/d\leq F\leq 1$.

Next, we show that all the inequalities in \thref{thm:EgC} are saturated by isotropic states with  sufficiently high purity. Let  $\rho$ be an isotropic state in dimension $d\times d$ \cite{Horodecki09}, which has the form
\begin{equation}
\rho=p(|\Psi\>\<\Psi|)+(1-p)\frac{I}{d^2},  \quad |\Psi\>=\frac{1}{\sqrt{d}}\sum_j |jj\>, 
\end{equation}
with $0\leq p\leq 1$. Let $F:=\<\Psi|\rho|\Psi\>=p+\frac{1-p}{d^2}$ be the fidelity between $\rho$ and $|\Psi\>\<\Psi|$. Then $1/d^2\leq F\leq 1$ and 
\begin{equation}\label{eq:rhoGEC}
\rho=F(|\Psi\>\<\Psi|)+(1-F)\frac{I-|\Psi\>\<\Psi|}{d^2-1}.
\end{equation}
The following proposition is an analog of 
\pref{pro:GCsym}. Here \eref{eq:GECsym}  follows from \rcite{SentEGH16}; \eref{eq:NRoEsym} should also be known before.  See  Appendix~E for a self-contained proof.
\begin{proposition}\label{pro:GECsym}
The state $\rho$ in \eref{eq:rhoGEC} with $F\geq 1/d^2$ satisfies 
\begin{align}
\hat{\caN}(\rho)&= \caN(\rho)= E_\caR(\rho)=\max\{0,dF-1\}, \label{eq:NRoEsym}
\\ E_\gc(\rho)&=\max\{0, dF-(d-1)\}. \label{eq:GECsym}
\end{align}
\end{proposition}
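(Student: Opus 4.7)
The plan is to treat the negativity/robustness identity \eref{eq:NRoEsym} and the concurrence identity \eref{eq:GECsym} separately, in both cases exploiting the $U\otimes U^*$ twirling channel $\mathcal{T}(\varrho)=\int dU\,(U\otimes U^*)\varrho(U\otimes U^*)^\dag$, which fixes every isotropic state, preserves $F$, and is a convex mixture of local unitaries. Two auxiliary separable isotropic states will appear repeatedly: $\sigma:=(I-|\Psi\>\<\Psi|)/(d^2-1)$, the isotropic state with $F=0$, and $\rho_{1/d}$, the one sitting exactly at the separability boundary.

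For \eref{eq:NRoEsym}, I would start from the identity $|\Psi\>\<\Psi|^{T_\rmA}=S/d$, where $S$ is the swap with $\pm1$-eigenspaces of dimensions $d(d\pm1)/2$. Substituting into $\rho^{T_\rmA}$ and diagonalizing gives the two eigenvalues $(1+dF)/[d(d+1)]$ and $(1-dF)/[d(d-1)]$, and the trace-norm computation immediately yields $\caN(\rho)=\max\{0,dF-1\}$. For $E_\caR(\rho)$, twirling any optimal mixing state $\sigma'$ keeps $\rho$ fixed and maps $\sigma'$ to an isotropic state while preserving the separability of $(\rho+x\sigma')/(1+x)$; hence one may assume $\sigma'$ is isotropic, and optimization over its fidelity $F'\in[0,1/d]$ selects $\sigma'=\sigma$ (the $F'=0$ state) with $x=dF-1$. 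Finally, convexity of $\caN$ gives $\hat{\caN}(\rho)\geq\caN(\rho)$, while the decomposition $\rho=\alpha|\Psi\>\<\Psi|+(1-\alpha)\rho_{1/d}$ with $\alpha=(dF-1)/(d-1)$, combined with $\caN(|\Psi\>)=d-1$ and the separability of $\rho_{1/d}$, yields $\hat{\caN}(\rho)\leq\alpha(d-1)=dF-1$, closing the chain.

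For \eref{eq:GECsym}, I would split on whether $F\leq(d-1)/d$ or $F>(d-1)/d$. In the first regime, let $|\phi\>=\frac{1}{\sqrt{d-1}}\sum_{j=0}^{d-2}|jj\>$: this pure state has Schmidt rank $d-1$ and overlap $|\<\Psi|\phi\>|^2=(d-1)/d$, so $\mathcal{T}(|\phi\>\<\phi|)$ equals the isotropic state at $F=(d-1)/d$. Since $\mathcal{T}$ is a mixture of local unitaries that preserve Schmidt rank, this realizes $\rho_{(d-1)/d}$ as a convex combination of Schmidt-rank-$(d-1)$ pure states, each with vanishing concurrence factor $f=d(\prod_j p_j)^{1/d}$, so $E_\gc(\rho_{(d-1)/d})=0$; for smaller $F$ the decomposition $\rho=\beta\rho_{(d-1)/d}+(1-\beta)\sigma$ with $\beta=dF/(d-1)$ gives $E_\gc(\rho)=0$. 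In the second regime, write $\rho=\alpha|\Psi\>\<\Psi|+(1-\alpha)\rho_{(d-1)/d}$ with $\alpha=dF-(d-1)$; convexity of $E_\gc$ together with $E_\gc(|\Psi\>)=1$ gives the upper bound $E_\gc(\rho)\leq dF-(d-1)$, while the matching lower bound follows from \thref{thm:EgC}: $E_\gc(\rho)\geq\hat{\caN}(\rho)-(d-2)=dF-(d-1)$. The main obstacle is this upper bound, which hinges entirely on the twirling identification of the borderline isotropic state $\rho_{(d-1)/d}$ as a mixture of Schmidt-rank-$(d-1)$ pure states; without the observation $\mathcal{T}(|\phi\>\<\phi|)=\rho_{(d-1)/d}$ for the specific $|\phi\>$ above, one would have to construct such a decomposition by hand.
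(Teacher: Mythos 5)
Your proof is correct and follows essentially the same route as the paper: the explicit partial-transpose/twirling computations for \eref{eq:NRoEsym}, the $U\otimes U^*$ twirl of the rank-$(d-1)$ maximally entangled state $|\Phi\>$ to kill $E_\gc$ at $F_0=(d-1)/d$, convexity for the upper bounds, and \thref{thm:EgC} for the matching lower bound. You merely spell out a few steps the paper compresses into ``straightforward to verify'' and ``symmetry considerations'' (the eigenvalues of $\rho^{\rmT_\rmA}$ and the reduction of the robustness optimization to isotropic mixing states).
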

\Pref{pro:GECsym} shows that
all the inequalities in \eref{eq:EgC}  of \thref{thm:EgC}  are saturated  by the state $\rho$ in \eref{eq:rhoGEC} with  $(d-1)/d\leq F\leq 1$.

\section{\label{sec:summary}Summary}
In summary, we  established a general operational one-to-one mapping between coherence measures and entanglement measures. Any entanglement measure of bipartite pure states is the minimum of a suitable coherence measure over product bases; any coherence measure
of pure states, with  extension to mixed states by convex roof, is  the maximum entanglement generated by  incoherent operations acting on  the system and an incoherent ancilla. Besides its foundational significance in bridging the two resource theories, this connection has
wide applications in quantum information processing. Thanks to this connection, many results on entanglement can be generalized  to the coherence setting, and vice versa. As an illustration, we provided tight observable lower bounds for generalized entanglement concurrence and coherence concurrence, which enable experimentalists to quantify entanglement and coherence of the maximal dimension in real experiments.

\bigskip

\acknowledgments
We are grateful to one referee for constructive suggestions and to another referee for mentioning \rcite{SentEGH16}. 
ZM thanks Prof. Jingyun Fan for helpful discussion.
HZ acknowledges financial support
from the Excellence
Initiative of the German Federal and State Governments
(ZUK~81) and the DFG. ZM acknowledges
support from The National Natural Science Foundation
of China (NSFC), Grants No.~11275131 and No.~11571313. 
SMF acknowledges
support from NSFC, Grant No.~11675113.
VV thanks the Oxford Martin School at the University of Oxford, the Leverhulme Trust (UK), the John Templeton Foundation,  the EPSRC (UK) and the Ministry of Manpower (Singapore). This research is also supported by the National Research Foundation, Prime Ministers Office, Singapore, under its Competitive Research Programme (CRP Award No. NRF-CRP14-2014-02) and administered by the Centre for Quantum Technologies, National University of Singapore.

\appendix

\section{Entanglement monotones}
In this Appendix,  we provide additional details on the connection between entanglement monotones  and symmetric concave functions on the probability simplex. We then prove Theorem~1  in the main text, which   is a variant of  a result first established  by Vidal \cite{Vida00}. This result is now well known among the experts, but some subtlety discussed here may be helpful to other readers.

Denote by $\mathcal{T}(\bbC^d)$  the space of density matrices on $\bbC^d$ and $\rmU(d)$ the group of unitary operators on $\bbC^d$. 
Let $\unf$ be the set of unitarily invariant   functions on the space of density matrices.
We assume that each function  $f\in \unf$ is defined on $\mathcal{T}(\bbC^d)$ for each positive integer $d$. For given $d$, the function satisfies 
\begin{gather}\label{eq:unf}
f(U\rho U^\dag)= f(\rho)\quad \forall \rho\in \mathcal{T}(\bbC^d),\; U\in \rmU(d).
\end{gather}
So $f(\rho)$ is a function of the eigenvalues of $\rho$. We also assume implicitly that the number of "0" in the spectrum of $\rho$ does not affect the value of 
$f(\rho)$.  Let $\ucf\subset \unf$ be the set of unitarily invariant real concave  functions on the space of density matrices. For given $d$,
each  function $f\in \ucf$  satisfies \eref{eq:unf} and in addition
\begin{gather}
f(p\rho_1+(1-p)\rho_2)\geq p f(\rho_1)+(1-p) f(\rho_2)\nonumber\\
\forall \rho_1, \rho_2\in \mathcal{T}(\bbC^d),\; 0\leq p\leq 1.
\end{gather}

Let $\mathcal{H}= \bbC^d\otimes \bbC^d$ be a bipartite Hilbert space shared by B and A. For simplicity here we assume that the Hilbert spaces for the two subsystems have the same dimension~$d$, but  this is not essential.
Any function $f\in\ucf$ can be used to construct an entanglement monotone $E_f$ on $\mathcal{T}(\mathcal{H})$ as follows \cite{Vida00}.
For a pure state $|\psi\>\in \mathcal{H}$,
\begin{equation}\label{seq:EntMeasurePure}
E_f(\psi):=f\left(\tr_\rmA(\ketbra{\psi}{\psi})\right).
\end{equation}
The monotone is then extended to mixed states $\rho\in \mathcal{T}(\mathcal{H})$ by convex roof,
\begin{equation}\label{seq:EntMeasureMix}
E_f(\rho):=\min_{\{p_j,\rho_j\}}\sum_j p_j E_f(\rho_j),
\end{equation}
where the minimization runs over all pure state ensembles of $\rho$ for which $\rho=\sum_j p_j\rho_j$.

The following theorem is reproduced from \rcite{Vida00}, where the reader can find a detailed proof.
\begin{theorem}\label{sthm:EntMeas}
	For any $f\in \ucf$, the function $E_f$ defined by \esref{seq:EntMeasurePure} and \eqref{seq:EntMeasureMix} is an entanglement monotone. Conversely, the restriction to pure states of any entanglement monotone is identical to $E_f$ for certain $f\in \ucf$.
\end{theorem}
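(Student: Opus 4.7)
My plan has two halves. For the forward direction, I first observe that $E_f(\psi)$ depends only on the Schmidt vector of $|\psi\>$: both reduced states $\tr_\rmA\outer{\psi}{\psi}$ and $\tr_\rmB\outer{\psi}{\psi}$ share the same nonzero spectrum, and the convention built into $\ucf$ that padding with zero eigenvalues leaves $f$ invariant removes the asymmetry. Unitary invariance of $f$ then yields local-unitary invariance of $E_f$. Monotonicity on average under LOCC on pure states reduces, by the standard decomposition of LOCC into rounds of one-sided instruments interleaved with classical communication and local unitaries, to the case of a single instrument $\{M_i\}$ acting on, say, A: writing $|\psi_i\>=(\id\otimes M_i)|\psi\>/\sqrt{p_i}$ with $p_i=\tr[(\id\otimes M_i^\dagger M_i)\outer{\psi}{\psi}]$, the B-marginals $\rho_{\rmB,i}$ satisfy $\sum_i p_i\rho_{\rmB,i}=\rho_\rmB$, so concavity of $f$ immediately gives $E_f(\psi)\ge \sum_i p_i E_f(\psi_i)$. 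The convex-roof extension to mixed states inherits selective monotonicity by applying the LOCC pure-state-wise to a near-optimal decomposition and invoking the infimum defining $E_f$; convexity is automatic.

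For the converse, given an entanglement monotone $E$ I define $f$ on each $\mathcal{T}(\bbC^d)$ by $f(\sigma):=E(\psi_\sigma)$, where $|\psi_\sigma\>$ is an arbitrary purification. Well-definedness follows because any two purifications of $\sigma$ are related by a unitary acting on the purifying side, which is a local unitary and so leaves $E$ invariant; the same argument applied to the B side yields unitary invariance of $f$. The crucial step is concavity: given $\sigma=\sum_i p_i\sigma_i$, the GHJW theorem supplies a purification $|\psi\>$ of $\sigma$ and a one-sided instrument on A whose outcomes $|\psi_i\>$ have B-marginals $\sigma_i$ with probabilities $p_i$, and LOCC monotonicity on average of $E$ then delivers $f(\sigma)=E(\psi)\ge \sum_i p_i E(\psi_i)=\sum_i p_i f(\sigma_i)$, i.e., $f\in\ucf$. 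By construction $E$ and $E_f$ agree on pure states.

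I expect the main obstacle to be dimension bookkeeping. The definition of $\ucf$ demands $f$ live on $\mathcal{T}(\bbC^d)$ for every $d$, and both directions implicitly cross dimensions: the forward argument identifies $f(\rho_\rmA)$ with $f(\rho_\rmB)$ even though the two reduced states may nominally sit in different ambient spaces, while in the converse direction the ancilla produced by GHJW grows with the number of ensemble members. Both uses rely on the convention that appending zero eigenvalues is inert, which the excerpt builds explicitly into $\ucf$. A secondary subtlety is showing that the convex roof preserves \emph{selective} LOCC monotonicity (not merely deterministic monotonicity): one must combine pure-state sub-decompositions of the post-LOCC states into a single valid decomposition of each outcome so as to invoke the infimum defining $E_f$, and this ensemble-merging argument is the one place where genuine care is needed.
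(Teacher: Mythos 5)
Your argument is correct and is essentially the proof of Vidal that the paper itself defers to rather than reproducing (the appendix states \thref{sthm:EntMeas} is taken from \rcite{Vida00}): the forward direction rests on the identity $\sum_i p_i\rho_{\rmB,i}=\rho_\rmB$ for a one-sided instrument together with concavity and unitary invariance of $f$, and the converse on HJW/GHJW steering combined with monotonicity on average. The dimension-padding conventions and the ensemble-merging step for the convex roof that you flag are precisely the points that need care, and your treatment of them is sound.
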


Next we clarify the relation between \thref{sthm:EntMeas} and Theorem~1 in the main text. Let $\Delta_d$ be the probability simplex of probability vectors with $d$ components.  A function on $\Delta_d$ is symmetric if it is invariant under permutations of the components of  probability vectors. 
Let  $\syf$ be the set of symmetric functions on the probability simplex. Here we assume implicitly  that each $f\in \syf$ is defined on $\Delta_d$ for each positive integer~$d$. In addition,  the value of $f(x)$ does not depend on the number of "0" in the components of  $x$; in other words, $f(x)=f(y)$ whenever $x\simeq y$ (which means $x\prec y$ and $y\prec x$), even if $x$ and $y$ have different numbers of components.

Any   symmetric function $f$ on the probability simplex  can be lifted to a unitarily invariant function on the space of density matrices,
\begin{equation}\label{eq:lift}
\check{f}(\rho):=f(\eig(\rho))\quad \forall \rho\in \mathcal{T}(\bbC^d).
\end{equation}
Conversely,  any  unitarily invariant function $f$ on the space of density matrices defines a symmetric function on the probability simplex  when restricted to diagonal density matrices,
\begin{equation}\label{eq:restriction}
\hat{f}(p):=f(\diag(p)) \quad \forall p\in \Delta_d.
\end{equation}
It is straightforward to verify that $\hat{\check{f}}=f$ for  any $f\in\syf$ and that 
$\check{\hat{f}}=f$ for any  $f\in\unf$. 
So the lifting map $f\mapsto \check{f}$ and the restriction map $f\mapsto \hat{f}$ establish a one-to-one correspondence between symmetric functions in $\syf$ and unitarily invariant  functions in $\unf$.

Recall that a real function 
$f$ on the probability simplex is Schur convex if it preserves the majorization order, that is, $f(x)\leq f(y)$ whenever $x\prec y$. By contrast, $f$ is Schur concave if it reverses the majorization order, that is, $f(x)\geq f(y)$ whenever $x\prec y$ \cite{MarsOA11book,Bhat97book}. 
Note that Schur convex functions and Schur concave functions are necessarily symmetric. 
In addition, symmetric convex (concave) functions are automatically Schur convex (concave), but not vice versa in general.
With this background, it is not difficult to show that the maps defined by  \esref{eq:lift} and \eqref{eq:restriction} preserve
(Schur) convexity and (Schur) concavity  for real  functions. Here we prove one of these properties that is most relevant to the current study; the other three properties  follow from a similar reasoning.  Recall that $\scf$ is the set of real symmetric concave functions on the probability simplex.
\begin{lemma}\label{lem:scf-ucf}
	The two maps $f\mapsto \check{f}$ and $f\mapsto\hat{f}$ set a bijection between $\scf$ and $\ucf$.	
\end{lemma}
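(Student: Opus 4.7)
The plan is to exploit the bijection between $\syf$ and $\unf$ already established in the paragraph preceding the lemma: once we show that concavity is preserved in both directions under the maps $f\mapsto\check{f}$ and $f\mapsto\hat{f}$, the claim follows immediately by restricting this bijection to the concave subsets $\scf\subset\syf$ and $\ucf\subset\unf$. So the work reduces to proving the two implications (i) $f\in\ucf\Rightarrow\hat{f}\in\scf$ and (ii) $f\in\scf\Rightarrow\check{f}\in\ucf$.

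Implication (i) is essentially free. The set of diagonal density matrices on $\bbC^d$ is a convex subset of $\mathcal{T}(\bbC^d)$, canonically parametrized by $\Delta_d$ via $p\mapsto \diag(p)$. Restricting a concave function to a convex subset yields a concave function, and symmetry is inherited from membership in $\syf$. So $\hat{f}\in\scf$ whenever $f\in\ucf$.

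Implication (ii) is the substantive direction. Unitary invariance of $\check{f}$ is automatic from the definition $\check{f}(\rho)=f(\eig(\rho))$, so the task is to show concavity in $\rho$. Given $\rho_1,\rho_2\in\mathcal{T}(\bbC^d)$ and $0\le p\le 1$, the plan is to invoke Ky Fan's majorization inequality (a standard result in matrix analysis, e.g.\ \cite{Bhat97book} which is already cited), which states
\[
\eig\bigl(p\rho_1+(1-p)\rho_2\bigr)\prec p\,\eig(\rho_1)+(1-p)\,\eig(\rho_2).
\]
Since $f\in\scf$ is symmetric and concave on $\Delta_d$, it is automatically Schur concave, so applying $f$ reverses the majorization order:
\[
f\bigl(\eig(p\rho_1+(1-p)\rho_2)\bigr)\ge f\bigl(p\,\eig(\rho_1)+(1-p)\,\eig(\rho_2)\bigr).
\]
Ordinary concavity of $f$ on the probability simplex then gives
\[
f\bigl(p\,\eig(\rho_1)+(1-p)\,\eig(\rho_2)\bigr)\ge p\,f(\eig(\rho_1))+(1-p)\,f(\eig(\rho_2)),
\]
and chaining these two inequalities yields $\check{f}(p\rho_1+(1-p)\rho_2)\ge p\check{f}(\rho_1)+(1-p)\check{f}(\rho_2)$, as required.

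The main obstacle is step (ii), and specifically the invocation of Ky Fan's inequality, which is precisely the classical input that converts concavity of a symmetric function on the simplex into concavity of its lifted spectral function on Hermitian matrices. Once this majorization fact is granted, everything else is bookkeeping on the already-proved $\syf\leftrightarrow\unf$ correspondence.
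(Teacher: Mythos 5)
Your proof follows essentially the same route as the paper's: direction (i) by restriction of a concave function to the convex set of diagonal density matrices, and direction (ii) via the majorization relation $\eig(p\rho_1+(1-p)\rho_2)\prec p\,\eig^\downarrow(\rho_1)+(1-p)\,\eig^\downarrow(\rho_2)$ followed by Schur concavity and then ordinary concavity of $f$. One small caveat: the majorization statement requires the two eigenvalue vectors on the right-hand side to be sorted in decreasing order (as the paper writes with $\eig^\downarrow$) — with an arbitrary common ordering the relation can fail — but since $f$ is symmetric this does not affect the rest of your argument.
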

\begin{proof}
	To prove the lemma, it suffices to show that the two maps $f\mapsto \check{f}$ and $f\mapsto\hat{f}$	preserve concavity.
	Given   $f\in\scf$,  let  $\rho_1, \rho_2\in \mathcal{T}(\bbC^d)$ be two arbitrary density matrices and $0\leq p\leq 1$. It is well-known that \cite{Bhat97book}
	\begin{equation}\label{eq:eigSum}
	\eig(p\rho_1+(1-p)\rho_2)\prec  p\eig^\downarrow(\rho_1)+(1-p)\eig^\downarrow(\rho_2), 
	\end{equation}
	where $\eig(\rho)$ denotes the vector of eigenvalues of $\rho$. 
	Consequently,
	\begin{align}
	&\check{f}\bigl(p\rho_1+(1-p)\rho_2\bigr)=f\bigl(\eig(p\rho_1+(1-p)\rho_2)\bigr)\nonumber\\
	&\geq f\bigl(p\eig^\downarrow(\rho_1)+(1-p)\eig^\downarrow(\rho_2)\bigr)\nonumber\\
	&\geq p f\bigl(\eig^\downarrow(\rho_1)\bigr)+(1-p)f\bigl(\eig^\downarrow(\rho_2)\bigr)\nonumber\\
	&=pf\bigl(\eig(\rho_1)\bigr)+(1-p)f\bigl(\eig(\rho_2)\bigr)\nonumber\\
	&=p\check{f}(\rho_1)+(1-p)\check{f}(\rho_2),
	\end{align}
	where the first inequality follows from \eref{eq:eigSum} and Schur concavity of $f$, and the second inequality from the concavity of $f$.
	Therefore $\check{f}$ is concave whenever $f$ is concave.
	
	On the other hand, if $f\in \ucf$, then $f$ is   concave in particular on diagonal density matrices, which implies that $\hat{f}$ is concave.
\end{proof}

\begin{proof}[Proof of Theorem~1]
	The theorem is an immediate consequence of \thref{sthm:EntMeas} and \lref{lem:scf-ucf}.
\end{proof}

\section{Coherence monotones and Coherence transformations}
 In this section we present a self-contained proof of
Theorem~2, which connects coherence monotones and symmetric concave functions on the probability simplex.  A variant of
this result  was first presented by Du, Bai, and Qi \cite{DuBQ15} (Theorem~1 there). The original proof has a gap in one direction (in particular the reasoning leading to Eq.~(12) there was not fully justified). Nevertheless, all essential ideas are already manifested in the proof.
It should be emphasized that the  set of coherence monotones is the same 
irrespective whether  IO or SIO is taken as the set of free operations.
Our study
also leads to a simpler proof of the majorization criterion on  coherent transformations under incoherent operations~\cite{DuBG15},  which is
the analog of Nielsen's majorization criterion on entanglement transformations under local operations and classical communication (LOCC) \cite{Niel99}. As we shall see, the two proofs share a key ingredient, which reflects the strong connection between coherence measures and coherence transformations.

Recall that any quantum operation $\Lambda$ (completely positive trace-preserving map) has a Kraus representation, that is, $\Lambda(\rho)=\sum_n K_n\rho K_n^\dag$, where the Kraus operators $K_n$ satisfy $\sum_n K_n^\dag K_n=\id$.
Denote by  $\mathcal{I}$  the set of incoherent states with respect to a given reference basis. Then the  operator $K_n$ is incoherent if
$K_n \rho K_n^\dag/p_n\in \mathcal{I}$  whenever  $\rho\in \mathcal{I}$ and $p_n=\tr(K_n\rho K_n^\dag)>0$. It is strictly incoherent if in addition $K_n^\dag$ is also incoherent.
Simple analysis shows that $K_n$ is incoherent iff its representation with respect to the reference basis has at most one nonzero entry in each column, and strictly incoherent if the same is also true for each row. The operation $\Lambda$ with Kraus representation $\{K_n\}$ is (strictly) incoherent if  each Kraus operator $K_n$ is (strictly) incoherent.
Although we are primarily concerned with incoherent operations (IO), most of our results concerning IO also apply to strictly  incoherent operations (SIO).

\subsection{Coherence monotones}
Recall that $\scf$  is the set of real symmetric concave functions on the probability simplex, and that each function $f\in\scf$ can be used to define a coherence monotone $C_f$ \cite{DuBQ15}. When $|\psi\>$ is a pure state,
\begin{equation}\label{seq:Cfpure}
C_f(\psi):=f(\mu(\psi));
\end{equation}
in general,
\begin{equation}
C_f(\rho):=\min_{\{p_j, \rho_j\}} \sum_j p_jC_f(\rho_j),
\end{equation}
where the minimization runs over all pure state ensembles of $\rho$ for which $\rho=\sum_j p_j\rho_j$.

\begin{proof}[Proof of Theorem~2]
	By the nature of the convex-roof construction, $C_f$ is automatically convex. In addition, to prove monotonicity under selective operations, it suffices to consider the scenario with a   pure initial state  $\psi$. Let $\Lambda=\{K_n\}$ be an arbitrary incoherent operation and  $|\varphi_n\rangle=K_n|\psi\>/\sqrt{p_n}$ with $p_n=\tr(K_n|\psi\rangle\langle\psi|K_n^\dag)$.  Then
	\begin{align}
	\sum_n p_n C_f(\varphi_n)&=\sum_n p_n f(\mu^\downarrow(\varphi_n))\leq f\left(\sum_n p_n \mu^\downarrow(\varphi_n)\right)\nonumber\\
	&\leq f( \mu(\psi))=C_f(\psi),
	\end{align}
	where the first inequality follows from the concavity of $f$, and the second inequality follows from \lref{lem:MajorizeOneMany} below and Schur concavity of $f$ (note that a symmetric concave function is automatically Schur concave). Therefore, $C_f$ is indeed a coherence monotone.

	Now we come to the converse, which is based on \rcite{DuBQ15}. Let $C$ be an arbitrary coherence monotone, then $C(\psi)$ is necessarily a symmetric function of $\mu(\psi)$ given that monomial unitaries (including permutations) are incoherent. Define $f$ on the probability simplex as follows, $f(x)=C(\psi(x))$ with $|\psi(x)\>=\sum_j \sqrt{x_j}|j\>$. Then $f$ is clearly symmetric. To prove concavity, let $x, y$ be two probability vectors, and $z=px+(1-p) y$ with $0\leq p\leq 1$. Let $|\psi(y)\>=\sum_j \sqrt{y_j}|j\>$ and $|\psi(z)\>=\sum_j\sqrt{z_j}|j\>$. Construct the quantum operation with the following two Kraus operators
	\begin{equation}
	\begin{aligned}
	K_1&=\sqrt{p} \diag\left(\sqrt{\frac{x_0}{z_0}},\sqrt{\frac{x_1}{z_1}},\ldots,\sqrt{\frac{x_{d-1}}{z_{d-1}}}\right),\\
	K_2&=\sqrt{1-p} \diag\left(\sqrt{\frac{y_0}{z_0}},\sqrt{\frac{y_1}{z_1}},\ldots,\sqrt{\frac{y_{d-1}}{z_{d-1}}}\right).
	\end{aligned}
	\end{equation}
	Here $x_j/z_j$ and $y_j/z_j$ for $0\leq j\leq d-1$ can be set to 1 whenever $z_j=0$, in which case  either $p(1-p)=0$ or $x_j=y_j=0$. Note that $K_1, K_2$ are strictly incoherent and satisfy
	$K_1^\dag K_1+K_2^\dag K_2=\id$. In addition,
	\begin{equation}
	K_1|\psi(z)\>=\sqrt{p}|\psi(x)\>,\quad K_2|\psi(z)\>=\sqrt{1-p}|\psi(y)\>.
	\end{equation}
	Since $C$ is a coherence monotone by assumption, we deduce that
	\begin{equation}
	C(\psi(z))\geq pC(\psi(x))+(1-p)C(\psi(y)),
	\end{equation}
	which implies that
	\begin{equation}
	f(px+(1-p)y)=f(z)\geq p f(x)+(1-p)f(y).
	\end{equation}
	Therefore, $f$ is both symmetric and concave. In addition, the coherence monotone $C$ coincides with   $C_f$ when restricted to pure states.
	
	Note that the above proof applies when either IO or SIO is taken as the set of free operations. Therefore, the set of convex-roof coherence monotones (measures) does not change under the interchange of IO and SIO.	
\end{proof}

\subsection{Coherent transformations under incoherent operations}
\Lref{lem:MajorizeOneMany} below was inspired by \rscite{DuBG15,DuBQ15}. It is a key ingredient for proving Theorem~2 and for establishing the majorization criterion on coherence transformations. Upon completion of this paper, we discovered that \lref{lem:MajorizeOneMany} follows from Theorem~1 in  \rcite{QiBD15}. However, the proof there crucially depends on Theorem~1 in \rcite{DuBQ15}  by the same authors, which is a variant of Theorem~2 in our main text that we try to prove. To avoid circular argument and to make our presentation self-contained, the discussion here is instrumental.

\begin{lemma}\label{lem:MajorizeOneMany}
	Suppose $|\psi\>$ is an arbitrary pure  state and $\Lambda=\{K_n\}$  is an arbitrary incoherent operation acting on $|\psi\>$.
	Let $p_n=\tr(K_n|\psi\rangle\langle\psi|K_n^\dag)$ and  $|\varphi_n\rangle=K_n|\psi\>/\sqrt{p_n}$  when $p_n>0$. Then
	\begin{equation}\label{eq:MajorizeOneMany}
	\mu(\psi)\prec \sum_n p_n\mu^\downarrow(\varphi_n).
	\end{equation}		
\end{lemma}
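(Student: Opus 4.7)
My plan is to reduce the claim $\mu(\psi)\prec\sum_n p_n\mu^\downarrow(\varphi_n)$ to Ky Fan's characterization of majorization: for every $k$ the sum of the top $k$ components of the left-hand side must be bounded above by the sum of the top $k$ components of the right-hand side, with equality at $k$ equal to the ambient dimension. The equality at the final index is automatic from trace preservation ($\sum_j|\psi_j|^2=1=\sum_n p_n$), so the real content is the top-$k$ bound for $k<d$. I would fix a subset $T\subseteq\{0,\ldots,d-1\}$ with $|T|=k$ that realizes the top-$k$ sum of $\mu(\psi)$, and aim to show that $\sum_n p_n\sum_{i\in f_n(T)}\mu(\varphi_n)_i\geq\sum_{j\in T}|\psi_j|^2$, where $f_n$ encodes the ``classical action'' of the incoherent Kraus operator $K_n$. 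Since $\sum_n p_n\mu^\downarrow(\varphi_n)$ is itself sorted in non-increasing order (being a sum of such vectors), its top-$k$ sum equals $\sum_n p_n\sum_{i=0}^{k-1}\mu^\downarrow(\varphi_n)_i$, and bounding $\sum_{i\in f_n(T)}\mu(\varphi_n)_i$ above by this top-$k$ sum via $|f_n(T)|\leq k$ will close the loop.

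The preparatory calculation unpacks incoherence: each $K_n$ admits the presentation $K_n=\sum_j k_{n,j}\ketbra{f_n(j)}{j}$ with $k_{n,j}\in\bbC$ and $f_n$ a partial function on basis labels. Setting $A_{n,i}:=f_n^{-1}(i)$ and $\ket{u_{n,i}}:=K_n^\dag\ket{i}=\sum_{j\in A_{n,i}}\bar k_{n,j}\ket{j}$, two structural facts emerge. First, for each fixed $n$ the vectors $\ket{u_{n,i}}$ have pairwise disjoint supports, one per $f_n$-fiber. Second, $\sum_n\sum_i\ketbra{u_{n,i}}{u_{n,i}}=\sum_n K_n^\dag K_n=I$. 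A direct computation also gives $p_n[\mu(\varphi_n)]_i=|\braket{u_{n,i}}{\psi}|^2$, which bridges the operator-theoretic picture and the coherence vectors.

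The core step, and the main obstacle I anticipate, is the operator inequality
\[
\sum_n\sum_{i\in f_n(T)}\ketbra{u_{n,i}}{u_{n,i}}\;\geq\;\sum_{j\in T}\ketbra{j}{j}.
\]
I would prove it by examining the complement. For $i\notin f_n(T)$ the fiber $A_{n,i}$ is disjoint from $T$, so $\ket{u_{n,i}}$ is supported entirely on $T^c$. With $Q:=\sum_{j\notin T}\ketbra{j}{j}$, this yields $\sum_n\sum_{i\notin f_n(T)}\ketbra{u_{n,i}}{u_{n,i}}=Q\bigl(\sum_n\sum_{i\notin f_n(T)}\ketbra{u_{n,i}}{u_{n,i}}\bigr)Q\leq Q\bigl(\sum_{n,i}\ketbra{u_{n,i}}{u_{n,i}}\bigr)Q=QIQ=Q$, where the inequality follows by adding back the nonnegative contribution from $i\in f_n(T)$. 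Subtracting this bound from the resolution of identity produces the claimed operator inequality. The incoherence of each $K_n$ is used precisely here, through the disjoint-support property of the $\ket{u_{n,i}}$; without it, the projection step onto $T^c$ would fail.

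Taking the expectation value in $\ket{\psi}$ yields $\sum_n\sum_{i\in f_n(T)}|\braket{u_{n,i}}{\psi}|^2\geq\sum_{j\in T}|\psi_j|^2$. Since $|f_n(T)|\leq k$, the inner sum on the left is bounded above by $p_n\sum_{i=0}^{k-1}\mu^\downarrow(\varphi_n)_i$, so
\[
\sum_n p_n\sum_{i=0}^{k-1}\mu^\downarrow(\varphi_n)_i\;\geq\;\sum_{j\in T}|\psi_j|^2=\sum_{j=0}^{k-1}\mu^\downarrow(\psi)_j.
\]
This is the Ky Fan inequality at level $k$, and together with the trace-preservation equality at $k=d$ it completes the proof of $\mu(\psi)\prec\sum_n p_n\mu^\downarrow(\varphi_n)$.
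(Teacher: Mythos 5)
Your proof is correct, and it reaches the conclusion by a genuinely different route from the paper's. The paper first writes each incoherent Kraus operator as $K_n=P_n\tilde K_n$ with $P_n$ a permutation and $\tilde K_n$ upper triangular (a normal form it asserts rather than proves; justifying it needs a small greedy or Hall-type matching argument), assumes without loss of generality that the $|\psi_j|$ are already sorted in decreasing order, and then verifies the partial-sum inequalities by an explicit index computation: restricting the completeness relation $\sum_n\tilde K_n^\dag\tilde K_n=I$ to the leading $r+1$ rows yields $\sum_{j\le r}\sum_np_n\mu_j(\tilde\varphi_n)=\sum_{k\le r}|\psi_k|^2+(\text{nonnegative excess})$. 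You instead keep the raw fiber description $K_n=\sum_jk_{n,j}\ketbra{f_n(j)}{j}$, fix an arbitrary top-$k$ subset $T$, and package the same mechanism as the operator inequality $\sum_n\sum_{i\in f_n(T)}\ketbra{u_{n,i}}{u_{n,i}}\ge\sum_{j\in T}\ketbra{j}{j}$, obtained by projecting the resolution of the identity onto $T^c$; the paper's identity is essentially the expectation value of your inequality in the special case $T=\{0,\dots,r\}$ after triangularization. Your version is cleaner in two respects: it dispenses with the triangularization and the sorting convention, and it isolates exactly where incoherence enters (rows indexed outside $f_n(T)$ are supported on $T^c$), finishing with the elementary counting bound $|f_n(T)|\le k$. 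The paper's computational version has one compensating advantage: the explicit nonnegative excess term $\sum_n\sum_{j\le r}\bigl|\sum_{l>r}(\tilde K_n)_{jl}\psi_l\bigr|^2$ lets the authors remark that the partial-sum identity extends to a wider class of operations whose Kraus operators are upper triangular up to permutations on the left. All your individual steps check out, including $p_n\mu_i(\varphi_n)=|\braket{u_{n,i}}{\psi}|^2$ and the observation that $\sum_np_n\mu^\downarrow(\varphi_n)$ is already sorted, so its top-$k$ sum is the sum of its first $k$ entries.
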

Although $|\varphi_n\rangle$ is not well defined when $p_n=0$, this fact does not cause any difficulty because  $K_n|\psi\>$ is what really matters in our calculation and it vanishes when $p_n=0$. Alternatively, we may restrict 
the summation in \eref{eq:MajorizeOneMany}
to the terms with $p_n>0$, and the conclusion is the same.
Similar comments also apply to several other equations appearing in this paper, but will not be mentioned again to avoid verbosity.

\begin{proof}
	By assumption each Kraus operator $K_n$ is incoherent and thus has at most one nonzero entry in each column. Therefore,  $K_n$
	can be expressed in the form $K_n=P_n \tilde{K}_n$, where $P_n$ is a permutation matrix and $\tilde{K}_n$ is upper triangular.
	The normalization condition $\sum_n K_n^\dag K_n=\sum_n\tilde{K}_n^\dag \tilde{K}_n=\id$ implies that $\sum_{j,n} {(\tilde{K}_n^*)}_{jk} {(\tilde{K}_n)}_{jl}=\delta_{kl}$ for all $k,l$.
	Since $\tilde{K}_n$ are upper triangular, we deduce that	
	\begin{equation}\label{eq:SummationFormula}
	\sum_{n}  \sum_{j=0}^{r} {(\tilde{K}_n^*)}_{jk}{(\tilde{K}_n)}_{jl}=\delta_{kl}\quad \forall r\geq \min\{k,l\}.
	\end{equation}
	Let  $|\tilde{\varphi}_n\rangle=\tilde{K}_n|\psi\>/\sqrt{p_n}$, then
	$\sqrt{p_n}{(\tilde{\varphi}_n)}_j=\sum_k {(\tilde{K}_n)}_{jk}\psi_k$, so that
	\begin{align}\label{eq:SummationFormula2}
	&\sum_{j=0}^r\sum_n  p_n\mu_j(\tilde{\varphi}_{n})
	=\sum_{j=0}^r\sum_n  p_n\bigl|{(\tilde{\varphi}_n)}_j\bigr|^2\nonumber\\
	&=\sum_{k,l}\sum_n \sum_{j=0}^r {(\tilde{K}_n^*)}_{jk}{(\tilde{K}_n)}_{jl}\psi_k^*\psi_l\nonumber\\
	&=\sum_{k=0}^r|\psi_k|^2+
	\sum_{k,l>r}\sum_n \sum_{j=0}^r {(\tilde{K}_n^*)}_{jk}{(\tilde{K}_n)}_{jl}\psi_k^*\psi_l\nonumber\\
	&=\sum_{k=0}^r|\psi_k|^2+
	\sum_n \sum_{j=0}^r \left|\sum_{l>r}{(\tilde{K}_n)}_{jl}\psi_l\right|^2\nonumber\\
	&\geq \sum_{k=0}^r|\psi_k|^2=\sum_{k=0}^r\mu_k(\psi),
	\end{align}
	where the third equality follows from \eref{eq:SummationFormula}.

	Since permutations of basis states are incoherent,
	without loss of generality, we may assume that the coefficients $|\psi_j|$ of $\psi$ in the reference basis are in decreasing order. Then  \eref{eq:SummationFormula2}  implies that
	\begin{equation}
	\mu(\psi)\prec \sum_n p_n\mu(\tilde{\varphi}_{n})\prec \sum_n p_n\mu^\downarrow(\tilde{\varphi}_{n})=\sum_n p_n\mu^\downarrow(\varphi_{n});
	\end{equation}
	here the last step  follows from the relation $|\varphi_n\>=P_n|\tilde{\varphi}_{n}\>$ with $P_n$ being a permutation, that is, $\mu(\varphi_{n})\simeq\mu(\tilde{\varphi}_{n})$.	
\end{proof}
As a side remark, \eref{eq:SummationFormula2} in the  proof of \lref{lem:MajorizeOneMany} actually holds for a larger class of operations whose Kraus operators have upper triangular form up to permutations on the left.  Such operations may generate coherence, but \eref{eq:MajorizeOneMany} still holds nevertheless if the coefficients $|\psi_j|$ of $\psi$ in the reference basis are in decreasing order. However, in general this conclusion no longer holds if  the coefficients do not have this property. Although this property can be recovered by a suitable permutation, the permutation required may destroy the upper triangular structure of the Kraus operators, which cannot be recovered by  permutations only on the left, in contrast with the scenario of incoherent operations. That is why \lref{lem:MajorizeOneMany} cannot hold in general for this wider class of operations, as expected.

\subsection{The majorization criterion on coherence transformations}
In addition to proving Theorem~2, \lref{lem:MajorizeOneMany} enables us     to construct a simple proof of
the  majorization criterion on coherence transformations under incoherent operations \cite{DuBG15}.
The result is the analog of Nielsen's majorization criterion on entanglement transformations under LOCC \cite{Niel99}.
\begin{theorem}\label{thm:CohMajorization}
	The pure state $|\psi\>$ can be transformed to  $|\varphi\>$ under IO or SIO
	iff $\mu(\psi)$ is majorized by $\mu(\varphi)$.
\end{theorem}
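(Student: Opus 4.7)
The plan is to mirror Nielsen's argument for the LOCC majorization criterion, using \lref{lem:MajorizeOneMany} for necessity and the Birkhoff--von Neumann / Hardy--Littlewood--Polya characterization of majorization for sufficiency. Because every SIO is an IO, it suffices to prove necessity for IO and sufficiency for SIO.

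For necessity, I would suppose $\Lambda = \{K_n\}$ is an IO with $\Lambda(\outer{\psi}{\psi}) = \outer{\varphi}{\varphi}$. Writing $p_n = \tr(K_n\outer{\psi}{\psi}K_n^\dag)$ and $\ket{\varphi_n} = K_n\ket{\psi}/\sqrt{p_n}$ for $p_n>0$, the equation $\sum_n p_n \outer{\varphi_n}{\varphi_n} = \outer{\varphi}{\varphi}$ combined with the rank-one structure of the right-hand side forces each nonzero outcome $\ket{\varphi_n}$ to be a phase multiple of $\ket{\varphi}$, so $\mu^\downarrow(\varphi_n) = \mu^\downarrow(\varphi)$. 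Feeding this into \lref{lem:MajorizeOneMany} then yields $\mu(\psi) \prec \sum_n p_n \mu^\downarrow(\varphi_n) = \mu^\downarrow(\varphi) \simeq \mu(\varphi)$.

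For sufficiency, assume $\mu(\psi) \prec \mu(\varphi)$. Since diagonal phase unitaries and basis permutations are strictly incoherent and SIOs compose, I may assume without loss of generality that $\ket{\psi} = \sum_j \sqrt{x_j}\ket{j}$ and $\ket{\varphi} = \sum_j \sqrt{y_j}\ket{j}$ with $x = \mu(\psi)$ and $y = \mu(\varphi)$ sorted in decreasing order. Hardy--Littlewood--Polya supplies a doubly stochastic $D$ with $x = Dy$, and Birkhoff--von Neumann decomposes $D = \sum_k q_k P_k$ as a convex combination of permutation matrices, with $P_k\ket{l} = \ket{\sigma_k(l)}$. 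I would then define Kraus operators $K_k = P_{\sigma_k^{-1}} D_k$, where $D_k$ is diagonal with $(D_k)_{jj} = \sqrt{q_k\, y_{\sigma_k^{-1}(j)}/x_j}$. A direct calculation gives $K_k\ket{\psi} = \sqrt{q_k}\ket{\varphi}$ for every $k$, so the transformation is deterministic; each $K_k$ is the product of a permutation matrix and a diagonal matrix, hence strictly incoherent; and the identity $\sum_k q_k y_{\sigma_k^{-1}(j)} = x_j$ extracted from the Birkhoff decomposition gives $\sum_k K_k^\dag K_k = \sum_{j:x_j>0}\outer{j}{j}$.

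The main obstacle I anticipate is the degenerate sector where some $x_j = 0$: the defining ratio is then formally $0/0$, and the Kraus operators as written fail to resolve the identity on the corresponding subspace. Both issues dissolve together upon noticing that $\sum_k q_k y_{\sigma_k^{-1}(j)} = 0$ with nonnegative summands forces $y_{\sigma_k^{-1}(j)} = 0$ for every $k$ with $q_k>0$, so setting $(D_k)_{jj} = 0$ is consistent; adjoining the strictly incoherent Kraus operators $\outer{j}{j}$ for each such $j$ then restores $\sum K^\dag K = I$ while annihilating $\ket{\psi}$, so the deterministic image is preserved. The resulting family is a genuine SIO (and hence also an IO) effecting $\ket{\psi}\to\ket{\varphi}$, which together with the necessity argument establishes the criterion for both classes of operations.
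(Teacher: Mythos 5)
Your proof is correct. The necessity direction coincides with the paper's: both exploit the rank-one structure of the output to conclude that every Kraus outcome is a phase multiple of $|\varphi\>$ and then invoke \lref{lem:MajorizeOneMany}. The sufficiency direction, however, takes a genuinely different route. The paper (following \rcite{DuBG15}) factors the doubly stochastic matrix into a product of $T$-transforms and builds, by induction, a \emph{composition} of two-outcome strictly incoherent operations, each with an explicit $2\times 2$ Kraus pair. You instead use the Birkhoff--von Neumann decomposition $D=\sum_k q_k P_k$ to write down a \emph{single} SIO whose Kraus operators $K_k=P_{\sigma_k^{-1}}D_k$ are permutations times diagonals, all mapping $|\psi\>$ to $\sqrt{q_k}\,|\varphi\>$; your handling of the $x_j=0$ sector (forcing $y_{\sigma_k^{-1}(j)}=0$ and adjoining the annihilating projectors $\outer{j}{j}$) is exactly what is needed to close the resolution of the identity. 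It is worth appreciating why your shortcut is available here but not in Nielsen's LOCC setting: there, a direct Birkhoff construction would require the doubly stochastic matrix to be unistochastic, whereas an incoherent Kraus operator only needs one nonzero entry per column (and per row for SIO), which a permutation-times-diagonal matrix automatically satisfies. Your one-shot construction is arguably more transparent and bounds the number of Kraus operators by the Carath\'eodory bound on the Birkhoff decomposition, while the paper's $T$-transform route emphasizes the parallel with the standard entanglement-theoretic argument and yields a sequential two-outcome protocol.
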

The conclusion concerning IO was first presented in \rcite{DuBG15};  the original proof of the "only if" part has a gap, which was corrected upon completion of our work.  In view of this gap, several recent works have derived weaker forms of \thref{thm:CohMajorization}. In particular, the conclusion concerning SIO was established in 
\rscite{WinterYang16,Chitambar16PRA}.

\begin{proof}
	Suppose $|\psi\>$ can be transformed to $|\varphi\>$ under an incoherent operation $\Lambda=\{K_n\}$. Let  $|\varphi_n\rangle=K_n|\psi\>/\sqrt{p_n}$ with $p_n=\tr(K_n|\psi\rangle\langle\psi|K_n^\dag)$. Then all $|\varphi_n\>$ with $p_n>0$ are identical to $|\varphi\>$ up to phase factors.
	So $\mu(\psi)\prec \mu(\varphi)$ according to  \lref{lem:MajorizeOneMany}, that is, $\mu(\psi)$ is majorized by $\mu(\varphi)$. Obviously, the same reasoning applies if $\Lambda=\{K_n\}$ is strictly incoherent.

	The proof of the other direction follows the approach presented in \rcite{DuBG15}. Since diagonal unitaries are incoherent, without loss of generality, we may assume that the coefficients $\psi_j, \varphi_j$ of $|\psi\>,\, |\varphi\>$ in the reference basis are real and nonnegative.
	
	If $\mu(\psi)$ is majorized by $\mu(\varphi)$, then $\mu(\psi)=A\mu(\varphi)$ with $A$  a suitable doubly stochastic matrix \cite{MarsOA11book,Bhat97book,Niel99}. Such a matrix can always be written as the product of a finite number of $T$-matrices, that is, $A=T_1 T_2\cdots T_k$, where each $T_j$ for $1\leq j\leq k$ acts nontrivially only on two components, on which it takes on the form
	\begin{equation}
	T=\begin{pmatrix}
	a & 1-a\\
	1-a &a
	\end{pmatrix},\quad 0\leq a\leq 1.
	\end{equation}
	By induction and the assumption that  permutations are free, we may assume that $A$ is a $T$-matrix of the form $A=\diag(T,\id)$ with $0<a<1$, so that $\mu(\psi)=\diag(T,\id)\mu(\varphi)$. In addition, we may assume that the first two components $\varphi_0, \varphi_1$ of $|\varphi\>$ are not zero simultaneously since, otherwise, the action would be trivial. Let
	\begin{equation}
	\begin{aligned}
	K_1&=\sqrt{a}\diag\left(\frac{\varphi_0}{\psi_0}, \frac{\varphi_1}{\psi_1}, 1,\ldots, 1\right),\\
	K_2&=\sqrt{1-a}\diag(K_2',1,\ldots,1),\;
	K_2'=\begin{pmatrix}
	0 & \frac{\varphi_0}{\psi_1}\\
	\frac{\varphi_1}{\psi_0} &0
	\end{pmatrix}.
	\end{aligned}
	\end{equation}
	Then the two operators   $K_1, K_2$ are strictly incoherent and satisfy $K_1^\dag K_1+K_2^\dag K_2=\id$.
	In addition,
	\begin{equation}
	K_1|\psi\>=\sqrt{a}|\varphi\>,\quad K_2|\psi\>=\sqrt{1-a}|\varphi\>.
	\end{equation}
	So the two operators $K_1, K_2$ define a strictly incoherent quantum operation that achieves the desired transformation from $|\psi\>$ to $|\varphi\>$.
\end{proof}

When there is no deterministic transformation from $|\psi\>$ to $|\varphi\>$,  it is of interest to determine the maximal probability of such  transformations. This problem has been solved in \rcite{DuBQ15} recently. The result is reproduced below for  the convenience of the reader.
\begin{theorem}\label{thm:MaxProInchoherent}
	Let $P(\psi\rightarrow\varphi)$ be the maximal probability of transforming  $|\psi\>$ to $|\varphi\>$ under IO. Then 
	\begin{equation}
	P(\psi\rightarrow\varphi)= \min_{m\geq 0}\frac{\sum_{j\geq m} \mu^\downarrow_j(\psi)}{\sum_{j\geq m} \mu^\downarrow_j(\varphi)}.
	\end{equation}
\end{theorem}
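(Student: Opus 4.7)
The plan is to prove the two inequalities $P(\psi\to\varphi)\leq R$ and $P(\psi\to\varphi)\geq R$ separately, where $R:=\min_{m\geq 0}\sum_{j\geq m}\mu^\downarrow_j(\psi)/\sum_{j\geq m}\mu^\downarrow_j(\varphi)$. The upper bound falls out quickly from the family of convex-roof coherence monotones associated with the truncated tail-sum functions, while the lower bound requires an explicit strictly incoherent protocol constructed in the spirit of Vidal's result on probabilistic bipartite LOCC transformations.

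For the upper bound, for each integer $m\geq 0$ I would consider $f_m(p):=\sum_{j\geq m} p^\downarrow_j$. This function is symmetric (it depends only on the sorted vector), nonnegative on the probability simplex, and concave there (on the simplex it equals $1-\sum_{j<m}p^\downarrow_j$, and the sum of the $m$ largest components, being the maximum of linear functions indexed by permutations, is convex). Hence $f_m\in\scf$, and \thref{thm:CohMeas} yields a convex-roof coherence monotone $C_{f_m}$ with $C_{f_m}(\chi)=\sum_{j\geq m}\mu^\downarrow_j(\chi)$ on pure states. A protocol attaining $P(\psi\to\varphi)=p$ is a selective incoherent operation one of whose Kraus outcomes returns $|\varphi\>$ with probability $p$ while the remaining outcomes deliver pure states $|\sigma_n\>$ with probabilities $q_n$. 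Monotonicity on average under IO gives $p\,C_{f_m}(\varphi)+\sum_n q_n\,C_{f_m}(\sigma_n)\leq C_{f_m}(\psi)$; dropping the nonnegative tail leaves $p\,C_{f_m}(\varphi)\leq C_{f_m}(\psi)$, and minimising over $m$ delivers $P(\psi\to\varphi)\leq R$.

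For the lower bound I would construct an explicit SIO protocol saturating $R$. Using that permutations and diagonal unitaries are strictly incoherent, reduce to the case $|\psi\>=\sum_j\sqrt{\alpha_j}|j\>$, $|\varphi\>=\sum_j\sqrt{\beta_j}|j\>$ with $\alpha_0\geq\alpha_1\geq\cdots$ and $\beta_0\geq\beta_1\geq\cdots$ nonnegative. Mirroring Vidal's probabilistic LOCC construction, the Kraus operators would be assembled as a sequence of two-level $T$-matrix SIO steps in the style of the proof of \thref{thm:CohMajorization}, with one distinguished branch $K_0$ satisfying $K_0|\psi\>=\sqrt{R}\,|\varphi\>$ and complementary branches $K_1,\dots,K_N$ absorbing the failure probability so that $\sum_n K_n^\dag K_n=I$. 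The main obstacle is that the naive choice $K_0=\sqrt{R}\,\diag(\sqrt{\beta_j/\alpha_j})$ need not obey $K_0^\dag K_0\leq I$, since $R$ need not be bounded by $\min_j\alpha_j/\beta_j$; the excess weight has to be rerouted through auxiliary branches whose structure is dictated by the minimising index $m^\ast$, and one must verify that after assembly the success branch still carries total probability $R$. A cleaner alternative is to pass through \thref{thm:MC}: by $\cnot$-conjugation, SIO protocols on the coherence side should correspond to LOCC protocols between the maximally correlated states $\psi_{\mrm{MC}}$ and $\varphi_{\mrm{MC}}$, reducing the claim to the standard probabilistic bipartite entanglement-transformation theorem applied to the Schmidt vectors $\lambda(\chi_{\mrm{MC}})=\mu(\chi)$; the subtlety of this route is checking that an optimal LOCC protocol between MC states can be chosen to use only Kraus operators that pull back through $\cnot$ to strictly incoherent operators on the coherence side.
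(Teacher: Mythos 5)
Your converse bound is complete and is precisely the argument the paper has in mind: it remarks below the theorem that each $f_m(p)=\sum_{j\geq m}p^\downarrow_j$ is symmetric and concave, so that $C_{f_m}$ is a bona fide convex-roof coherence monotone, and $P(\psi\rightarrow\varphi)\leq C_{f_m}(\psi)/C_{f_m}(\varphi)$ then follows from monotonicity on average together with nonnegativity, exactly as you argue. Be aware, though, that the paper does not actually prove this theorem; it reproduces it from the cited work of Du, Bai, and Qi and only indicates the strategy, namely that the proof mirrors Vidal's probabilistic LOCC theorem and relies on the majorization criterion (\thref{thm:CohMajorization}) plus the classification of coherence monotones (\thref{thm:CohMeas}). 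Your upper bound matches that indicated strategy.

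The achievability direction is where your proposal has a genuine gap: neither of your two routes is carried out. On the direct route you correctly identify that $K_0=\sqrt{R}\,\diag\bigl(\sqrt{\beta_j/\alpha_j}\bigr)$ can violate $K_0^\dag K_0\leq \id$, but the ``rerouting through auxiliary branches dictated by the minimising index'' is not a detail to be waved at --- it is the entire substantive content of the construction (and of Vidal's original entanglement argument). Concretely, one must exhibit an intermediate coherence vector $\gamma\succ\mu(\psi)$, reachable deterministically from $|\psi\>$ by \thref{thm:CohMajorization}, from which a single two-outcome diagonal filter succeeds with probability exactly $R$; proving that such a $\gamma$ exists requires a case analysis on where the minimum over $m$ is attained and a verification of the resulting tail identities, none of which appears in your write-up. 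Your alternative route through $\cnot$ is not a shortcut either: the direction you need --- that an optimal probabilistic LOCC protocol between the maximally correlated states $\psi_{\mrm{MC}}$ and $\varphi_{\mrm{MC}}$ pulls back to an (S)IO protocol achieving the same probability on the coherence side --- is exactly the subtlety you flag, and nothing in \thsref{thm:EntCoh} or \ref{thm:MC} supplies it; those results control entanglement generated \emph{from} coherence, not the reverse simulation of LOCC by incoherent operations. As it stands, the upper bound is proved and the lower bound is not.
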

\Thref{thm:MaxProInchoherent} still holds if IO is replaced by SIO. This is clear from the proof presented in \rcite{DuBQ15} and \thref{thm:CohMajorization}, which
imply that the incoherent operation achieving the maximal probability can be chosen to be strictly incoherent. It is worth pointing out that the sum
$\sum_{j\geq m} \mu^\downarrow_j(\psi)$ for each positive integer  $m$ is a bona fide coherence measure associated with the function $f_m(p)=\sum_{j\geq m} p^\downarrow_j$, which is symmetric and concave \cite{Vida99,DuBQ15}.
\Thref{thm:MaxProInchoherent} is the analog of a similar result on entanglement transformations under LOCC, first established in \rcite{Vida99}. The idea of the proof in \rcite{DuBQ15} also mirrors the analog  in the entanglement setting. Not surprisingly, the majorization criterion    plays a crucial role in proving \thref{thm:MaxProInchoherent} as it does in proving the result in \rcite{Vida99}.  Also, the proof relies on  Theorem~1 in  \rcite{DuBQ15}, which is a variant of Theorem~2 in our main text. 
Since these stepping stones have been corroborated, \thref{thm:MaxProInchoherent} is  well established by now.

\section{Proof of \lref{lem:Majorization}}
\begin{proof}
	Expand $|\psi\rangle$ in the reference basis $|\psi\>=\sum_{jk}c_{jk}|jk\>$,
	then $\mu(\psi)=( |c_{jk}|^2)_{jk}$. Let $\rho_\rmB$ be the reduced density matrix for subsystem B, then   the diagonal of $\rho_\rmB$ reads $\diag(\rho_\rmB)=(\sum_k |c_{jk}|^2)_j$. It follows that  $\mu(\psi)\prec\diag(\rho_\rmB)\prec \eig(\rho_\rmB)\simeq\lambda(\psi)$, where $\eig(\rho_\rmB)$ denotes the vector of eigenvalues of $\rho_\rmB$, and we have applied the  majorization relation
	$\diag(\rho_\rmB)\prec \eig(\rho_\rmB)$~\cite{Bhat97book}. The inequality  $C_\rk(\psi)\geq E_\rk(\psi)$ is an immediate consequence of the relation $\mu(\psi)\prec\lambda(\psi)$.

	Let  $r=E_\rk(\psi)$ be the Schmidt rank  of $|\psi\>$.  If $\mu(\psi)\simeq \lambda(\psi)$, then   $C_\rk(\psi)= E_\rk(\psi)=r$.
	If $C_\rk(\psi)= E_\rk(\psi)=r$, then $|\{j |  c_{jk}\neq 0\; \exists k\}|\geq r$	 given that $\rho_\rmB$ has rank~$r$. By the same token $|\{k |  c_{jk}\neq 0\; \exists j\}|\geq r$. So the coefficient matrix $c_{jk}$
	has exactly $r$ nonzero components, with at most one  on each row and each column. Therefore, $|\psi\>$ has the form $|\psi\>=\sum_{j=0}^{r-1} a_j |\pi_1(j)\pi_2(j)\>
	$ with $\sum_j |a_j|^2=1$, where $\pi_1, \pi_2$ are two permutations of basis states. In addition,  $|a_j|^2$ coincide with the Schmidt coefficients of $|\psi\>$, which implies \eref{eq:MajorMaxKet} after redefining $\pi_1,\pi_2$ if necessary. Conversely, the relation $\mu(\psi)\simeq \lambda(\psi)$ 	 holds automatically whenever $|\psi\>$ has the form of \eref{eq:MajorMaxKet}.
\end{proof}

\section{Proof of \lref{lem:ProductBound}}
\begin{proof}
\Lref{lem:ProductBound} is trivial when $d=1,2$, so we assume $d\geq3$ in the following discussion. 
Note that both sides of  the inequality in \eref{eq:ProductBound} are invariant under permutations of $c_j$ and are independent of the phase factors, so we may assume that $c_0\geq c_1\geq \cdots\geq c_{d-1}\geq0$ without loss of generality; then it suffices to consider the nontrivial case $c_0>0$.   Define
\begin{align}
h(\{c_j\}):=&d\biggl(\prod_j c_j \biggr)^{2/d}- \biggl(\sum_j  c_j\biggr)^2+ (d-1)\sum_j c_j^2,
\end{align}
then it remains to show that $h\geq0$.

First,  consider the special case $c_0=\cdots=c_{d-2}=a$ and $c_{d-1}=b$ with $a>0$ and  $0\leq b\leq a$. Since $h$ is homogeneous, we may assume $a=1$, so that $0\leq b\leq 1$. Then
 \begin{align}
 h(\{c_j\})=&g(b):=db^{2/d}+(d-2)b^2-2(d-1)b.
 \end{align}
 The first and second derivatives of $g(b)$ are given by 
 \begin{equation}\label{eq:gd2}
 \begin{aligned}
 g'(b)=&2b^{(2-d)/d}+2(d-2)b-2(d-1),\\
 g''(b)=&\frac{2(d-2)}{d}\bigl(d-b^{2(1-d)/d}\bigr). 
 \end{aligned}
  \end{equation}
 According to these formulas, it is easy to verify that $g'(b)$ has only two zeros $0<b_0<b_1=1$ in the  interval $0<b\leq 1$. In addition, $g'(b)>0$ when $0< b< b_0$  and $g'(b)<0$ when $b_0<b<1$. Therefore, the minimum of $g(b)$ over the interval  $0\leq b\leq 1$ can only be attained at $b=0$ or $b=1$. Since $g(0)=g(1)=0$, we conclude that $h(\{c_j\})=g(b)\geq 0$ for  $0\leq b\leq 1$, and the inequality is saturated iff $b=0$ or $b=1$.

Next, consider the general case. Since $h$ is homogeneous, we may assume that $\sum_j c_j^2=1$  without loss of generality.  Let $s:=\sum_j  c_j$;
then 
\begin{align}
h(\{c_j\})=d\biggl(\prod_j c_j \biggr)^{2/d}- s^2+ (d-1). 
\end{align}
If $s<\sqrt{d-1}$, then $h>0$. If $s=\sqrt{d-1}$, then $h\geq 0$ and the inequality is saturated iff $c_{d-1}=0$, in which case we have $c_0=c_1=\cdots=c_{d-2}=1/\sqrt{d-1}$. If $s=\sqrt{d}$, then $c_0=c_1=\cdots=c_{d-1}=1/\sqrt{d}$ and $h=0$. 
It remains to consider the scenario $ \sqrt{d-1}<s<\sqrt{d}$, in which case $c_j>0$ for all $j$.

Now, we investigate the minimum of $h(\{c_j\})$  for a given value of $s$. Suppose  the minimum is attained at a given point. 
Using the method of Lagrangian multipliers, it is easy to show that $c_0, c_1, \ldots, c_{d-1}$ take on  two different values, that is, $c_0=\cdots= c_{k-1}> c_k=\cdots =c_{d-1}$, where $1\leq k\leq d-1$. Note that not all $c_j$ can take on the same value due to the constraint $\sum_j c_j^2=1$ and $s<\sqrt{d}$. In the case $d=3$, straightforward calculation  shows that $k=2$, in which case we have
\begin{align}
c_0=c_1=\frac{2s+\sqrt{6-2s^2}}{6}, \quad c_2=\frac{s-\sqrt{6-2s^2}}{3}.
\end{align}
In general, we have $k=d-1$; otherwise,  the values of $c_{k-1},c_k, c_{d-1}$ can be adjusted so  that the value of the product $c_{k-1}c_kc_{d-1}$ decreases, while $c_{k-1}+c_k+ c_{d-1}$ and $c_{k-1}^2+c_k^2+ c_{d-1}^2$ are left invariant, which leads to a contradiction.
The fact  $k=d-1$ implies that 
\begin{equation}
\begin{aligned}
c_0=\cdots=c_{d-2}=&\frac{s+\sqrt{(d-s^2)/(d-1)}}{d},\\
c_{d-1}=&\frac{s-\sqrt{(d-s^2)(d-1)}}{d}. 
\end{aligned}
\end{equation}
Since $c_0=\cdots=c_{d-2}>c_{d-1}$ for $ \sqrt{d-1}<s<\sqrt{d}$, it follows that $h(\{c_j\})>0$ according to the discussion after \eref{eq:gd2}. This observation completes the proof of \lref{lem:ProductBound}.
\end{proof}

\section{Proofs of \psref{pro:GCsym} and \ref{pro:GECsym}}
\begin{proof}[Proof of \pref{pro:GCsym}]
	\Eref{eq:l1RoCsym} can be derived as follows. The equality $C_{l_1}(\rho)=p(d-1)=dF-1$ is easy to verify; the equality $\hat{C}_{l_1}(\rho)= C_{l_1}(\rho)$ follows from the inequality $\hat{C}_{l_1}(\rho)\geq C_{l_1}(\rho)$ and the convexity of $\hat{C}_{l_1}(\rho)$, which implies that $\hat{C}_{l_1}(\rho)\leq p\hat{C}_{l_1}(\psi)= p(d-1)=C_{l_1}(\rho)$; the equality $C_\caR(\rho)= C_{l_1}(\rho)$ follows from Theorem~6 in \rcite{Piani16}. Alternatively, $C_\caR(\rho)$ can be computed based on the symmetry consideration that  $\rho$ is invariant under arbitrary permutations of the basis states. 
	
	To derive \eref{eq:GCsym}, let $\rho_j=|\varphi_j\>\<\varphi_j|$ with 
	\begin{equation}
	|\varphi_j\> =\frac{1}{\sqrt{d-1}}\sum_{k=0, k\neq j}^{d-1}|k\>,\quad j=0, 1,\ldots, d-1. 
	\end{equation}
	Then 
	\begin{equation}
	\frac{1}{d}\sum_{j=0}^{d-1} \rho_j=a|\psi\<\psi|+(1-a)\frac{I}{d}
	\end{equation}
	with $a=(d-2)/(d-1)$. Note that $C_\gc(\rho_j)=0$ for $j=0,1, \ldots, d-1$, we conclude that $C_\gc(\rho)=0$ when $p=(d-2)/(d-1)$, which implies that $C_\gc(\rho)=0$ for  $0\leq p\leq (d-2)/(d-1)$. When $p\geq (d-2)/(d-1)$, we have
	\begin{equation}
	C_\gc(\rho)\leq \frac{p-a}{1-a}C_{\gc}(\psi)=p(d-1)-(d-2)=dF-(d-1). 
	\end{equation}
	On the other hand, the opposite inequality follows from \thref{thm:CgC} and \eref{eq:l1RoCsym} in the main text. This observation confirms \eref{eq:GCsym} and completes the proof of \pref{pro:GCsym}. 
\end{proof}

\begin{proof}[Proof of \pref{pro:GECsym}]
	\Eref{eq:NRoEsym} can be derived as follows. When $1/d^2\leq F\leq 1/d$, we have $\hat{\caN}(\rho)= \caN(\rho)= E_\caR(\rho)=0$ because   $\rho$ is separable.
	When $F\geq 1/d$, the equality $\caN(\rho)=dF-1$ is straightforward  to verify.  The equality $\hat{\caN}(\rho)= \caN(\rho)$ follows from the inequality $\hat{\caN}(\rho)\geq \caN(\rho)$ and the convexity of $\hat{\caN}(\rho)$, which implies that 
	\begin{equation}
	\hat{\caN}(\rho)\leq \frac{F-\frac{1}{d}}{1-\frac{1}{d}}\hat{\caN}(|\Psi\>\<\Psi|)=dF-1=\caN(\rho).
	\end{equation}
	Finally, the equality $E_\caR(\rho)= dF-1$  can be derived  based on the symmetry consideration that  $\rho$ is invariant under the transformation $U\otimes U^*$ for any unitary $U$ (here $U^*$ denotes the complex conjugate of $U$ with respect to a given basis; by contrast, the Hermitian conjugate of $U$  is denoted by $U^\dag$).

	To derive \eref{eq:GECsym}, let 
	\begin{equation}
	|\Phi\> =\frac{1}{\sqrt{d-1}}\sum_{j=0}^{d-2}|jj\>
	\end{equation}
	and  $F_0=|\<\Psi|\Phi\>|^2=(d-1)/d$. Then 
	\begin{align}
	& \int\rmd U \left[(U\otimes U^*)(|\Phi\>\<\Phi|) (U\otimes U^*)^\dag\right]\nonumber\\ &=F_0(|\Psi\>\<\Psi|)+(1-F_0)\frac{I-|\Psi\>\<\Psi|}{d^2-1},
	\end{align}
	where the integral is taken with respect to the normalized Haar measure on the unitary group.
	Observing  that $E_\gc(|\Phi\>\<\Phi|)=0$, we conclude that $E_\gc(\rho)=0$ when $F=F_0$, which further implies that $E_\gc(\rho)=0$ for  $1/d^2\leq F\leq F_0$. When $F\geq F_0$, we have
	\begin{equation}
	E_\gc(\rho)\leq \frac{F-F_0}{1-F_0}E_{\gc}(\Psi)=dF-(d-1). 
	\end{equation}
	On the other hand, the opposite inequality follows from \thref{thm:EgC} and \eref{eq:NRoEsym} in the main text.  This observation confirms \eref{eq:GECsym} and completes the proof of \pref{pro:GECsym}. 
\end{proof}

\end{document}